\documentclass[10pt, journal, twocolumn]{IEEEtran}

\usepackage{epsfig,graphics,amssymb}
\usepackage{epsfig,graphics,amssymb,bbm}
\usepackage{amscd}
\usepackage{amsmath}
\usepackage{dsfont}
\usepackage{booktabs}
\usepackage{tabularx}
\usepackage{xcolor}
\usepackage{mathtools}
\usepackage[nolist]{acronym}
\usepackage{amscd}
\usepackage{amsmath}
\usepackage[font=footnotesize]{caption}
\usepackage{enumerate}
\usepackage{theorem}
\usepackage{cite}
\usepackage{tikz,pgfplots}
\usepackage{algorithm}
\usepackage{algpseudocode}
\usepackage{subcaption}

\newcommand{\setP}{\mathbbmss{P}}
\newcommand{\setR}{\mathbbmss{R}}

\newcommand{\setC}{\mathbbmss{C}}

\newcommand{\Ex}[1]{\mathbbm{E} \left\lbrace #1 \right\rbrace}

\newcommand{\maP}{\mathcal{P}}

\newcommand{\bxx}{\mathbf{x}}
\newcommand{\bss}{\mathbf{s}}
\newcommand{\bvv}{\mathbf{v}}
\newcommand{\bmm}{\mathbf{m}}

\newcommand{\brc}[1]{\left( #1 \right)}

\newcommand{\br}{{\boldsymbol{r}}}
\newcommand{\bu}{{\boldsymbol{u}}}

\newcommand{\bc}{{\mathbf{c}}}

\newcommand{\bgg}{{\mathbf{g}}}

\newcommand{\byy}{{\mathbf{y}}}
\newcommand{\bzz}{{\mathbf{z}}}
\newcommand{\baa}{{\mathbf{a}}}

\newcommand{\trp}{\mathsf{T}}
\newcommand{\her}{\mathsf{H}}

\newcommand{\mA}{\mathbf{A}}

\newcommand{\mR}{\mathbf{R}}

\newcommand{\mI}{\mathbf{I}}

\newcommand{\mJ}{\mathbf{J}}

\newcommand{\mW}{\mathbf{W}}
\newcommand{\mC}{\mathbf{C}}

\newcommand{\mM}{\mathbf{M}}

\newcommand{\mH}{\mathbf{H}}
\newcommand{\mZ}{\mathbf{Z}}

\newcommand{\mse}{\mathrm{MSE}}
\newcommand{\crb}{\mathrm{CRB}}

\newcommand{\norm}[1]{\left\lVert #1 \right\rVert}
\newcommand{\abs}[1]{\left\lvert #1 \right\rvert}
\newcommand{\tr}[1]{\mathrm{tr} \{ #1 \}}

\newtheorem{definition}{Definition}

\newtheorem{theorem}{Theorem}

\newtheorem{lemma}{Lemma}
\newtheorem{remark}{Remark}
\IEEEoverridecommandlockouts
\usepackage{titlesec}
\titlespacing*{\section}{0pt}{4pt}{4pt}
\titlespacing*{\subsection}{0pt}{4pt}{4pt}
\begin{document}

\begin{acronym}
	\acro{mimo}[MIMO]{multiple-input multiple-output}
    \acro{mse}[MSE]{mean square error}
    \acro{ssl}[SSL]{Sensing signaling load}
	\acro{mmwave}[mmW]{millimeter wave}
	\acro{tdd}[TDD]{time division duplexing}
	\acro{sinr}[SINR]{signal-to-interference-plus-noise ratio}
	\acro{csi}[CSI]{channel state information}
	\acro{ao}[AO]{alternating optimization}
	\acro{rhs}[r.h.s.]{right hand side}
	\acro{lhs}[l.h.s.]{left hand side}
	\acro{awgn}[AWGN]{additive white Gaussian noise}
	\acro{iid}[i.i.d.]{independent and identically distributed}
	\acro{isac}[ISAC]{integrated sensing and communication}
	\acro{tas}[TAS]{transmit antenna selection}
	\acro{rf}[RF]{radio frequency}
	\acro{srzf}[SRZF]{secure regularized zero-forcing}
	\acro{irs}[IRS]{intelligent reflecting surface}
	\acro{mm}[MM]{majorization-maximization}
	\acro{mmse}[MMSE]{minimum mean square error}
 \acro{sgd}[SGD]{stochastic gradient descent}
  \acro{svd}[SVD]{singular value decomposition}
	\acro{fp}[FP]{fractional programming}
	\acro{bs}[BS]{base station}
	\acro{bcd}[BCD]{block coordinate descent}
	\acro{qos}[QoS]{quality-of-service}
 \acro{moop}[MOOP]{multi-objective optimization}
  \acro{soop}[SOOP]{single-objective optimization}
	\acro{fl}[FEEL]{federated edge learning}
    \acro{ota-fl}[OTA-FEEL]{over-the-air federated edge learning}
	\acro{los}[LoS]{line-of-sight}
	\acro{snr}[SNR]{signal-to-noise ratio}
	\acro{ps}[PS]{parameter server}
	\acro{kkt}{Karush-Kuhn-Tucker}
	\acro{bcd}[BCD]{ block coordinate descent}
 \acro{sdr}[SDR]{semidefinite relaxation}
\acro{ml}[ML]{ maximum-likelihood}
\acro{crb}[CRB]{Cram\'er-Rao bound}
\end{acronym}
\IEEEoverridecommandlockouts
\title{Multi-objective  Optimization for Over-the-Air Federated Edge Learning-enabled Collaborative Integrated Sensing and Communications}
\author{
		\IEEEauthorblockN{
			Saba Asaad, 
            Hina Tabassum, \textit{Senior Member,~IEEE}
            and Ping Wang, \textit{Fellow,~IEEE}
			\thanks{Saba Asaad, Hina Tabassum and Ping Wang are with the Department of Electrical Engineering and Computer Science at York University, Toronto, Canada; emails:  \{asaads, hinat, pingw\}@yorku.ca.}
		}
	}
\maketitle
\raggedbottom
	
\begin{abstract}
    This paper introduces a novel multi-objective  integrated sensing and communications (ISAC) framework to enable collaborative wireless sensing in conjunction with over-the-air federated-edge learning (OTA-FEEL). The framework enables multi-task OTA aggregation to handle sensing and learning simultaneously, while benefiting from dual-purpose uplink signals for both communications and target sensing.
    Starting from characterizing the local sufficient statistics at each edge device and establishing its stationarity, we develop a tractable analytical expression for the local sufficient statistics. To suppress the interference  from uplink transmissions of other devices through matched filtering, we then propose a novel orthogonal pulse shaping method. 
    Then, we derive optimal unbiased estimate of the target's coordinates by casting the centralized problem of joint likelihood function maximization of all devices as the distributed likelihood maximization of each device (which requires only local sufficient statistics). A lower bound on the sensing error variance is then characterized using the Cramér-Rao bound (CRB). We then formulate multi-objective optimization (MOOP) problem  to minimize the \ac{mse} and sensing error bound simultaneously. The considered problem is then solved using $\epsilon$-constrain method. Numerical results demonstrate that the proposed dual-purpose OTA-FEEL-enabled collaborative ISAC framework enhances sensing accuracy without adversely affecting the performance of the primary OTA-FEEL task. While conventional single-shot collaborative sensing schemes are limited by the average error of local  estimators, the proposed algorithm achieves the CRB of the considered  problem. 
\end{abstract}
\begin{IEEEkeywords}
    Over-the-air federated learning, multi-objective optimization, integrated sensing and communication.
\end{IEEEkeywords}
\IEEEoverridecommandlockouts
\section{Introduction}
With the continuous advancements in wireless communications, the need for systems capable of performing multiple tasks using shared resources is becoming increasingly critical. 
For instance, \ac{isac}  enables communication and sensing tasks within the same platform. 
\Ac{isac} is designed to efficiently share spectrum, hardware, and signal processing resources between communication and radar-like sensing applications \cite{liu2022survey}. This dual functionality enables enhanced sensing capabilities at minimal cost while opening up frequency resources. Researchers have explored numerous techniques for the joint design of communication and sensing systems, such as waveform design, joint resource allocation and advanced signal processing algorithms \cite{zhao2022radio}. 
These approaches enable simultaneous transmission of communication and sensing data, requiring advanced signal processing across multiple aspects of the \ac{isac} system. 

Initial research studies in \ac{isac} typically considered centralized sensing where a centralized unit, e.g., a \ac{bs}, is responsible for both sensing and communications. For instance, in \cite{liu2018mu}, \ac{mimo}-\ac{isac} beamforming has been designed for shared and separate radar-communication deployments  with sensing at the BS to detect the target. In \cite{liu2019hybrid}, mmWave \ac{mimo}-\ac{isac} with hybrid beamforming was analyzed, framing the hybrid beamformer design as an optimization problem that balances communication and sensing performance  performed at the \ac{bs}. 
\color{black}
While centralized approaches can be effective under ideal conditions—when the central unit has sufficient computational and communication resources—they often become impractical in real-world deployments due to high signaling overhead, limited scalability, and elevated energy consumption \cite{eisele2023multiple}.
For instance, in cloud-RAN ISAC systems, centralized sensing imposes substantial signaling load due to fronthaul transmission of raw signals. As shown in \cite{zou2024distributed}, distributed sensing can reduce this load by up to $70\%$ under realistic system parameters. Similarly, \cite{xu2016distributed} demonstrates that distributed estimation in UAV networks eliminates the need for long-range communication with a central controller, significantly reducing energy consumption while maintaining estimation accuracy close to that of centralized methods. In asynchronous sensor networks, \cite{yuan2018expectation} shows that distributed localization reduces communication overhead by a factor of 10, achieves comparable accuracy, converges rapidly, and offers greater resilience to link failures.
These results highlight the limitations of centralized sensing in bandwidth and energy-constrained environments, motivating the need for scalable distributed solutions.\footnote{\color{black} See Section~V.C for further quantitative discussions in this respect.}
\color{black}

To address these challenges, we propose an \textit{\ac{ota-fl}-enabled collaborative ISAC framework} in which target estimation is performed in a distributed manner across multiple network nodes. 
The proposed framework enables  sensing objects in a collaborative manner, while using a single signal optimized for both sensing and communications.  That is, when the users transmit local parameters of sensing and computation tasks to the \ac{ps}, they simultaneously receive reflections from nearby objects. These reflections are then utilized for local estimation, and the resulting estimates are further improved using \ac{ota-fl} framework. This approach enables reducing communication overhead and resource consumption.


\subsection{Related Work}
Distributed sensing offers promising benefits in terms of scalability and reduced reliance on centralized infrastructure, yet practical realizations remain limited in the literature.
\color{black}A closely related study is \cite{li2023integrated}, which proposed a decentralized approach for target localization, where each sensor locally estimates the target’s position using reflected radar signals and sends it over-the-air in a single round for averaging at the server. This one-shot method focuses solely on sensing, without federated learning or multiple communication rounds.
\color{black}
Subsequently, designing a fully distributed sensing approach with performance levels comparable to centralized methods is crucial. In this context, recent research has begun to explore the integration of \textit{centralized} \textit{sensing into \ac{ota-fl} }systems.  
\color{black}  In \cite{du2024integrated}, the authors focused on executing a federated learning task alongside radar sensing within an \ac{isac} framework. \color{black} In this system, the \ac{ps} efficiently transmits integrated signals to communicate with edge devices while performing target detection based on the reflections of these signals. This approach highlights the ISAC concept by enabling the PS to perform sensing and communication simultaneously, but without involving edge devices in sensing. \color{black}
Unlike  \cite{du2024integrated} where centralized sensing was considered, \cite{liu2022toward} explores standard \ac{fl} with the objective of loss function minimization  while considering time-switching \ac{isac} waveform. The waveform allows each device to perform sensing and communication in an orthogonal manner. 
This work assumes interference-free transmissions due to sufficient device separation, which may not hold in practical settings.
In \cite{zheng2024over}, a joint client selection and power allocation problem was studied for centralized sensing in an \ac{ota-fl} system, where all clients send their sufficient statistics to the central server for joint detection via hypothesis testing. A flexible client selection scheme was proposed, allowing devices to either join FEEL training or only send sensing data. The study in \cite{liu2022toward} considered \textit{collaborative sensing solely}, and \cite{zheng2024over} separates sensing from learning tasks. Our study, however, focuses on proposing a dual-purpose ISAC signaling design that simultaneously executes distributed sensing and data transmission using a common signaling method. In dual-purpose ISAC design, sensing data is collected through a common signal. This approach is distinct from sensing-specific designs, where signals are pre-designed to satisfy favorable properties, such as orthogonality. These pre-designed signals are unsuitable for dual-purpose signaling because the transmitted signal is primarily optimized to transfer data. This leads to strong undesired interference that degrades sensing performance.


\subsection{Contributions}
\textcolor{black}{Joint sensing and FEEL in wireless systems can be considered in two settings: ($i$) sensing coexists with FEEL (i.e., FEEL does not assist sensing), e.g., \cite{asaad2025FL_ISAC}, and ($ii$) \ac{fl} is used solely for sensing, i.e., users collect data, train local models, and then perform global model aggregation.
The former reflects centralized sensing, while the latter represents a distributed setting that can potentially benefit from integration with distributed learning.}  In this paper, we develop a novel multi-objective ISAC framework, referred to as \textit{CollabSenseFed}, to enable OTA-FEEL-empowered collaborative wireless sensing. The framework utilizes multi-task OTA aggregation to handle both sensing and learning, while benefiting from dual-purpose uplink signals for both communications and sensing. \footnote{\color{black} We use standard Federated Averaging for model aggregation, given its effectiveness in synchronous networks. Alternative methods, e.g.,\cite{wang2025communication} may enhance performance and can be incorporated into our framework in future.\color{black}}
The key contributions of this paper can be summarized as:

$\bullet$ {\bf Multi-task OTA Aggregation:} 
 We develop a framework for multi-task OTA computation, in which uplink-downlink array beamforming is optimized to enable simultaneous computation and aggregation of multiple tasks. This extension to OTA computation  enables us to process multiple models  over-the-air in parallel, thus allowing for gradient aggregation for the localization task and the \ac{fl} model aggregation. 
 
 $\bullet$ {\bf Dual-Purpose Signaling:}  {We propose a dual-purpose ISAC signaling in which the reflections from uplink transmit signals are used to sample the target's position. The dual-signaling prevents us from using a pre-defined orthogonal sensing signals. This causes interference on the received reflection of each device from reflections of the other devices' information-bearing signals, leading to very noisy sensing samples. To address this issue, we propose a novel pulse shaping in which the devices modulate their uplink transmission by predefined orthogonal pulses. }
 This approach enables devices to effectively suppress interference from uplink transmissions of other devices while preserving OTA model aggregation. 
 
$\bullet$ {\bf Collaborative sensing via OTA-FEEL:} 
Starting from characterizing the local sufficient statistics at each device, we establish its stationarity to simplify the local sufficient statistics. Then, we derive optimal unbiased estimate of the target's coordinates by casting the centralized problem of joint likelihood function maximization of all devices as the distributed likelihood maximization of each device. To further quantify the sensing error, we invoke the \ac{crb}, which determines a lower bound on the variance the target location estimate. This bound is given in terms of the inverse of the Fisher information matrix, which cannot be derived analytically. We leverage its basic properties to obtain an analytic bound to analytically quantify the sensing error.  
 
$\bullet$ {\bf Multi-objective Optimization:} We formulate a \ac{moop} to minimize the aggregation error and sensing error bound with respect to uplink and downlink beamformers. We invoke the $\epsilon$-constraint method to transform the \ac{moop} into a constrained \ac{soop}. We then utilize a \ac{bcd} scheme to approximate the optimal beamformers by alternating between two marginal problems: one that minimizes the aggregation error while constraining the sensing error, and the other that optimizes the sensing metric subject to power constraints. We approximate the latter problem with a relaxed convex optimization whose solution can be readily computed via convex programming. Using this result, we develop a low-complexity algorithm that approximates the optimal beamformers iteratively. 
 
$\bullet$ We validate the proposed scheme through extensive numerical experiments on MNIST and CIFAR-10. Results show that CollabSenseFed’s localization error closely tracks the CRB for joint estimation, offering a notable improvement over conventional methods, where the error is bounded by the average of individual estimators.

	
\textit{Notations}: 
Scalars, vectors, and matrices are denoted by non-bold, bold lowercase, and bold uppercase letters, respectively. For any matrix $\mH$, $\mH^{\trp}$, $\mH^*$, and $\mH^{\her}$ denote the transpose, conjugate, and Hermitian transpose. $\mI_N$ is the $N \times N$ identity matrix; $\setR$ and $\setC$ denote the real and complex sets. $\mathcal{CN}(\eta, \sigma^2)$ is the complex Gaussian distribution with mean $\eta$ and variance $\sigma^2$. $\dagger$ denotes the pseudo-inverse, $\otimes$ the Kronecker product, and $\Ex{.}{}$ the expectation. We write ${1,\ldots,N}$ as $[N]$, $\nabla$ for gradient, and $\norm{\cdot}_{\rm F}$ for Frobenius norm.

\section{System Model And Assumptions}
We consider a \ac{fl} setting in which a single \ac{ps} and $K$ devices are training a global model over a wireless network. We assume that the devices and the \ac{ps} are equipped with $M$ and $N$-antennas, respectively. In addition to the FEEL training, the devices share sensing parameters, such as {gradients of local localization losses}, with the \ac{ps}. At this stage, we focus on the over-the-air computation aspect of the system, while the nature of these sensing parameters will be discussed in Section \ref{sec:Localize}. 
We assume that the devices perform $I$ simultaneous computation tasks over-the-air. Setting $I=1$ reduces the problem to the classical single-function over-the-air computation, while $I=2$ models the scenario where both \ac{fl} and localization-related computations are carried out simultaneously. We refer to this problem as \textit{multi-function or multi-task computation} and formulate it precisely in the presence of a reflecting target.
\begin{figure}[t]
 		\centering
		\hspace*{-.2cm}
		\includegraphics[scale=.55]{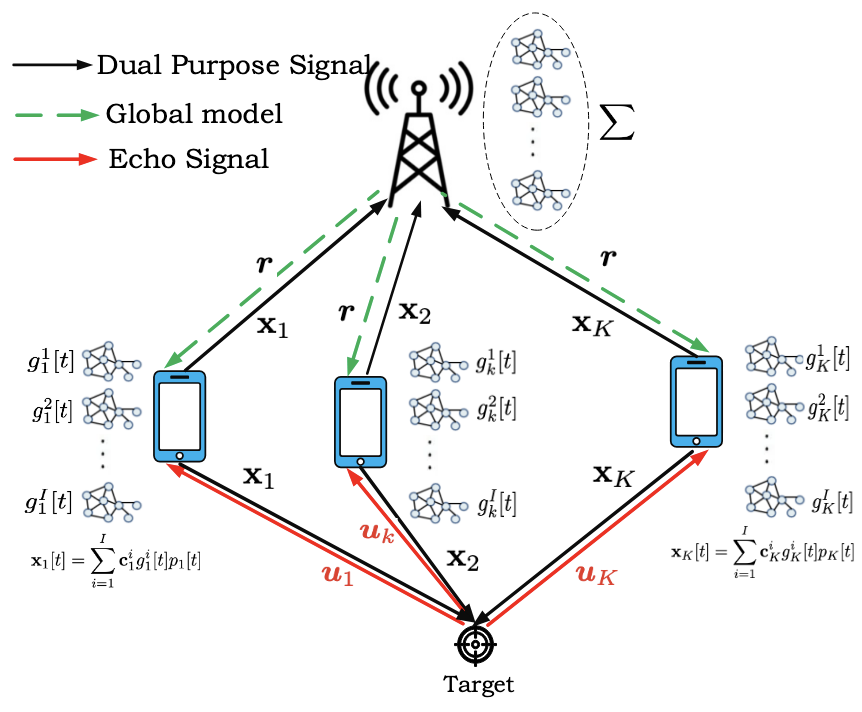}
		\caption{ Illustration of the multi-task OTA-FEEL set-up.}
        \vspace{-.51cm}
		\label{Fig1_Schematic}
\end{figure}
  \vspace{-.51cm}
\subsection{Multi-function (or Multi-task) Computation Model}
The devices aim to transmit their parameters for the $I$ different tasks simultaneously. These parameters may include model updates for the \ac{fl} task or gradients of localization losses for the sensing task. We assume that these parameters are computed locally at the devices and then shared with the \ac{ps}. Let $g_k^i[t]$ denote the parameter of task $i\in [I]$ computed locally by device $k$ at symbol interval $t$, e.g., a model parameter in \ac{fl} problem or an entry of localization loss's gradient. Device $k$ then computes its \textcolor{black}{analog} transmit symbol at time $t$ from its $I$ parameters $g_k^1[t], \ldots, g_k^I[t]$ as $\bss_k [t] = \sum_{i=1}^I \bc_k^{i} g_k^i [t]$ for some $\bc_k^i\in\setC^M$ that precodes the parameter $g_k^i[t]$. This can be written as follows: 
\begin{align}\label{eq:trS}
	\bss_k [t] = \mC_k \bgg_k [t],
\end{align}
where $\mC_k = [\bc_k^1, \ldots,\bc_k^I ]$ and $\bgg_k [t] = [g_k^1 [t], \ldots,g_k^I[t] ]^\trp$. 
We assume that for all $k$ and $i$, $g_k^i[t]$ describes a stationary process with auto-correlation $R_k [\tau]$ defined as $R_k[\tau] = \Ex{g_k^i [t]g_k^i [t-\tau]^*}$ that decays against $\tau$, i.e.,  $R_k[\tau] \leq R_k[0]$ for all $\tau$.
In practice, parameter updates across various tasks often show similar temporal correlations, thus we consider the auto-correlation is the same across tasks (i.e., the same for all $i$) \cite{li2020federated}.
The mean and variance of these processes are assumed to be zero and one, respectively, i.e., 
$\Ex{g_k^i [t]} = 0$ and $\Ex{g_k^i [t]g_k^i [t]^*} = R_k[0] = 1$.\footnote{This is consistent with the fact that parameters are typically centered and scaled before transmission in distributed learning \cite{sedaghat2023novel, hamidi2025over}.}
We further assume that the computations are performed over independent functions, e.g., independent mini-batches of data, and hence the parameters $g_k^i [t]$ are independent across devices and tasks, i.e., 
$\Ex{g_k^i [t]g_\ell^j [t]^*}= 0$ for all $k\neq \ell$ and $i\neq j$. The devices are limited in transmit power with upper bound value $P$, i.e., $ \Ex{\norm{\bss_k [t]}^2} \leq P$. Considering the statistical model of the local parameters, this concludes that the scaling factors satisfy $ \sum_{i=1}^I \norm{\bc_k^{i}}^2 = \norm{\mC_k}_{\rm F}^2 \leq P$ for $k\in[K]$.

\subsection{Dual-Purpose ISAC Signal Model}
We consider a stationary target whose reflections are received by the devices after a delay that depends on the distance of each device from the target. \textcolor{black}{We assume that the target is located within the sight of the devices, i.e., the reflections are received only by the devices.} The devices aim to utilize the target's reflections within a fixed time-frame to jointly localize the target's position. To this end, they construct a dual-purpose transmit signal from the symbol $\bss_k[t]$ and a predefined pulse, which can be used for both OTA computation and localization.

We focus on a time-frame consisting of $T$ symbol intervals. Device $k$ constructs its transmit signal by multiplying its transmit symbols by a predefined pulse $p_k[t]$ of length $T$.  This means that it sends in interval $t \in [T]$ the signal
\begin{align}
	\bxx_k[t] = \bss_k[t] p_k[t].
\end{align}
The pulses $p_k[t]$ for $k\in [K]$ build an orthogonal basis, i.e., for any $k\neq \ell$, we have $ \sum_{t=1}^T p^*_k[t] p_\ell [t] = 0$.
As we show later, this enables the devices to cancel out the interference on their received reflections \footnote{\color{black}Although orthogonal pulse shaping theoretically eliminates inter-device interference as $T \to \infty$, practical implementations with finite $T$ may experience residual interference. This can perturb $\tilde{\Xi}_k$, degrade the quality of sufficient statistics, and bias the estimate of $v^\star$. However, these effects can be mitigated by longer pulse durations, improved pulse design, or interference-aware estimation techniques.}.\color{black} To keep the pulse not impacting the transmit power, we assume $ \abs{p_k[t]}^2 =1$ for $t\in[T]$. This simply implies that $\frac{1}{T}	\sum_{t=1}^T \abs{p_k[t]}^2 = 1$. Examples of such orthogonal pulses are given by discrete Fourier Transform (DFT) and Walsh-Hadamard matrices \cite{seberry2005some}. \textcolor{black}{Note that due to the use of over-the-air computation, the devices are synchronized. As a result, their pulses can be designed to be perfectly orthogonal. It is assumed that the pulses $p_k[t]$ for $k\in [K]$ are predefined and known to all the devices and \ac{ps}. As shown in Fig. \ref{Fig1_Schematic}, the devices send their transmit signals over the wireless channel to the \ac{ps}. These signals serve two purposes: 1) they superimpose over-the-air and are used by the \ac{ps} to compute the aggregated parameters, and 2) they provide the devices with reflections from the target, which are used to localize the target.}


\subsection{Over-the-Air Computation Model}
At interval $t$, the \ac{ps} receives a superimposed version of the transmitted signals, i.e., 
\begin{align} 
	\hspace{-.35cm}\byy[t] = \sum_{k=1}^{K} \mH_k \bxx_k[t] + \bzz [t]
 = \sum_{k=1}^{K} \mH_k \bss_k[t] p_k[t] + \bzz [t],\label{eq:yPS}
\end{align}
where $\bzz [t]\sim\mathcal{CN}\brc{0, \sigma^2\mI_N}$ is the additive white Gaussian noise at the PS, and $\mH_k\in\setC^{N\times M}$ is the matrix of channel coefficients between device $k$ and the \ac{ps} which is assumed to be constant within the time-frame, i.e., the channel coherence time is larger than $T$. By plugging \eqref{eq:trS} into \eqref{eq:yPS}, we can see that the received signal at the \ac{ps} contains the superimposed version of all local parameters, i.e., 
\begin{subequations}
    \begin{align}
	\byy[t]  = \sum_{k=1}^{K}  \mH_k \mC_k \bgg_k [t]  p_k[t]+ \bzz [t].
\end{align}
\end{subequations}

The \ac{ps} intends to compute a predefined aggregation on the local parameters for each function \textcolor{black}{directly over the air, i.e., in the analog domain}. This means that for each $i = 1,\ldots, I$, it desires to compute $\bar{g}^i [t] = \sum_{k=1}^K w_k^i g_k^i [t]$ for some averaging weights $w_k^i$ with $i=1,\ldots, I$ and $k=1,\ldots, K$. To this end, it uses $I$ separate linear receivers $\bmm_1 [t], \ldots, \bmm_I[t]$ to compute $I$ different post-processing components from the received signal as $r^j [t] =  \bmm_j^\her[t]  \byy[t]$. By defining the receive beamforming matrix $\mM[t] = [\bmm_1 [t], \ldots, \bmm_I[t]]$ and, the vector of aggregated parameters as $ \br [t] = [r^1 [t], \ldots, r^I [t] ]^\trp$, we can express the aggregated model parameter compactly as follows:
\begin{subequations}
\begin{align}
	\br [t] &=  \mM^\her [t]  \byy[t],\\&=  \sum_{k=1}^{K}  \mM^\her [t] \mH_k \mC_k \bgg_k [t]  p_k[t]+ \mM^\her[t] \bzz [t].
 \label{eq:RR}
\end{align}
\end{subequations}


From \eqref{eq:RR} it can be seen that the \ac{ps} through transmit and receive beamforming can recover a distorted version of the desired aggregations. This describes the concept of over-the-air computation. The ultimate goal in this setting is to find beamformers for the \ac{ps} and devices such that $r^j[t]$ closely approximates the desired $\bar{g}^j[t], \forall j$. Due to channel noise and task/device interference, this goal is only achieved with some error, which we refer to as the \textit{aggregation error}. At time $t$, the aggregation sum-error is given by \eqref{eq:mse} at the top of the next page. In this expression, we define $\bar{\bgg} [t] = [\bar{g}^1 [t], \ldots,\bar{g}^I[t] ]^\trp$, and $\mW_k\in\setR^{I\times I}$ is a diagonal matrix whose diagonal entries are $w_k^1,\ldots,w_k^I$. 
  \begin{figure*}
\begin{align}\label{eq:mse}
	&\mse\brc{\mM [t], \{\mC_k\}} = \Ex{ \norm{\br [t]- \bar{\bgg} [t]}^2 }= \Ex{ \norm{\sum_{k=1}^{K}  \mM^\her [t] \mH_k \mC_k \bgg_k [t]  p_k[t]+ \mM^\her[t] \bzz [t]- \sum_{k=1}^K \mW_k \bgg_k [t]}^2 },\nonumber\\[-5pt]
&\hspace{-.4cm}=\Ex{ \norm{\sum_{k=1}^{K}  \brc{\mM^\her [t] \mH_k \mC_k   p_k[t] - \mW_k } \bgg_k [t]+ \mM^\her[t] \bzz [t] }^2}= \sum_{k=1}^{K}  
\norm{ {\mM^\her [t] \mH_k \mC_k   p_k[t] - \mW_k } }_F^2
+ \sigma^2 \norm{\mM [t]}_{\rm F}^2.
\end{align}
 \hrule
\end{figure*}
The main goal of the \ac{ps} and the devices is to minimize the time average of the aggregation sum-error, i.e.,
\begin{align}
	\hspace{-.31cm}\overline{\mse} &\brc{\{ \mM [t] \}, \{\mC_k\}} = \frac{1}{T} \sum_{t=1}^T  \mse \brc{\mM[t], \{\mC_k \}}.\label{eq:7}
\end{align}
In the following, we take this metric as a measure that quantifies the quality of over-the-air computation. \color{black}  Note that the MSE model in \eqref{eq:7} captures the impact of both communication and sensing.
The sensing-related gradients contribute to the transmitted signal and introduce an additional source of interference, which propagates through the aggregation function. Thus, the derived MSE expression inherently accounts for the impact of sensing by incorporating the errors arising from the estimated parameters within the computation process. To address this, the precoding matrix $\mathbf{C}_k = [\bc_k^1, \ldots,\bc_k^I ]$ is designed to balance both computation and sensing tasks. This trade-off is explicitly addressed in the \ac{moop} in Section IV.
\color{black}

\subsection{Received Echo Signal Model}
Given the radiated signals, each device receives a superposition of echoes from the target, including the~echoes of all transmitted signals. These reflections can be used to jointly estimate the location of the target. Let the coordinates of the target be denoted by $\bold{v} = [\ell_x, \ell_y, \ell_z]$. The echo received from the target at device $k$ at time $t$ is given by
\begin{align}
	\bu_k[t]  = \sum_{\ell=1}^K \baa_{k} \brc{\bold{v}} \baa_{\ell} \brc{\bold{v}}^\trp \bxx_\ell [t]  + \boldsymbol{\nu}_k [t],\label{eq:u}
\end{align}
where $\baa_{k}\brc{\bold{v}}$ represents the array response of device $k$ at the angle-of-arrival corresponding to the line connecting device $k$ to the target at position $\bold{v}$. 
The term $\boldsymbol{\nu}_k[t] \in \setC^M$ further denotes additive white Gaussian noise with mean zero and variance $\varsigma_k^2$. 
Using the far-field propagation model \cite{proakis2007digital}, the array response of  device $k$ is then given by:
\begin{align}\label{array_resp}
	\hspace{-.2cm}\baa_{k}\brc{\bold{v}} = \alpha_k [ 1, e^{-j\frac{2\pi d}{\lambda} \sin \theta_k\brc{\bold{v}} }, \ldots, e^{-j\frac{2\pi \brc{M-1} d}{\lambda} \sin \theta_k\brc{\bold{v}}}],
\end{align}
where $\alpha_k=\frac{\alpha_0}{\norm{\bvv-\bvv_k}^2}$ with $\alpha_0$ being a constant that represents the reference path-loss. $\bold{v}_k= [\ell_x^k,\ell_y^k,\ell_z^k]$ is the location of device $k$, $\theta_k(\bold{v})$ denotes the angle-of-arrival by the line connecting $\bold{v}$ to $\bold{v}_k$. 
This relationship satisfies $\sin \theta_k\brc{\bold{v}} = \frac{\ell_z - \ell_z^k }{\Vert \bold{v} - \bold{v}_k \Vert},$ where $\ell_z$ and $\ell_z^k$ are  $z$-coordinates of the target and device $k$, respectively. Thus, the angle is influenced by their height difference in 3D space. $\lambda$ is the wavelength of the transmitted signal, and $d = \Vert \bold{v} - \bold{v}_k \Vert$ is the distance between the target and device $k$. 
Given the reflections $\bu_k[1], \ldots, \bu_k[T]$, devices aim to localize target by estimating $\mathbf{v}$. They filter the received signals to compute sufficient statistics, which are then used to derive an estimator for $\mathbf{v}$. 

\section{OTA-FEEL-enabled Collaborative Sensing}
The ultimate goal of the devices is to address both sensing problem, i.e., estimating $\mathbf{v}$ from $\bu_k[1], \ldots, \bu_k[T]$, and the aggregation simultaneously. To this end, upon receiving the reflections, they design uplink signals that contain some information regarding both sensing and learning tasks. They then transmit their uplink signals and receive the next set of reflections. To accomplish this joint task efficiently, there are two design problems to be addressed: (1) designing precoders $\mC_k$ and beamformers $\bmm_i[t]$ to enhance both learning and sensing performance, and (2) eveloping a joint algorithmic approach. We tackle these through the following steps:
\begin{itemize}
\item We first specify how the device $k$, upon receiving $\bu_k[1], \ldots, \bu_k[T]$, extracts sufficient statistics ${\Xi}_k$. 
\item We study properties of the sufficient statistics in \textbf{Lemmas~\ref{lem:1} and \ref{lem:2}}, which help us derive simplified  sufficient statistics denoted by $\hat{\Xi}_k$.
\item Given $\hat{\Xi}_k$, we determine the optimal unbiased estimate of the target coordinate $\bvv$ by maximizing joint likelihood $p\brc{\hat{\Xi}_1 , \ldots, \hat{\Xi}_K  \vert \bvv }$. We show that this estimator reduces to optimizing the sum of local log-likelihoods, where the likelihood of device $k$ depends only on its local sufficient statistics $\hat{\Xi}_k$. Leveraging this, we frame the localization problem as a distributed learning task.
 \item We find an analytic metric for sensing performance by deriving a lower bound on the sensing error variance. Specifically, \textbf{Theorem~\ref{th:1}} provides an explicit expression for the Fisher information matrix in the collaborative sensing setup. Invoking the Cram\'er-Rao theorem, we use the Fisher information to bound the sensing error when the clients solve the maximum-likelihood estimation problem \textit{jointly}.Although the CRB is explicit, it lacks a closed-form in terms of design parameters (e.g., precoders $\mC_k$ for $k \in [K]$). To address this, we use basic properties of the Fisher information matrix to obtain an analytic bound in \textbf{Theorem~\ref{thm:3}}, which serves as a performance metric for collaborative sensing.
 \item We formulate \ac{moop} problem in which both the learning metric, i.e., $\mse\brc{\mM [t], \{\mC_k\}}$, and sensing metric, i.e., lower bound on sensing error $\crb_L \brc{\{\mC_k\}}$ determined in \textbf{Theorem~\ref{thm:3}}, are to be minimized simultaneously. 
\end{itemize}
\vspace{-.1cm}
\subsection{Sufficient Statistics at each Device}
\vspace{-.1cm}
The sensing task of the devices is to use their received reflections over an interval of length $T$ to estimate the target location. To this end,  upon receiving $\bu_k[1], \ldots, \bu_k[T]$, device $k$ extracts sufficient statistics by matching its received sequence  with its transmit signal $\bxx_k[t]$, i.e., it computes
\begin{align}\label{eq:eqeta}
	&\Xi_k \brc{\bold{v}} = \frac{1}{{P }T} \sum_{t=1}^T \bu_k [t] \bxx_k^\her [t],  \nonumber\\
	&= \frac{1}{{P }T} \!\sum_{t=1}^T  \sum_{\ell=1}^K 
 \baa_{k}\!\brc{\bold{v}} \baa_{\ell}\!\brc{\bold{v}}^\trp \bxx_\ell [t] \bxx_k^\her [t]\nonumber
 + \frac{1}{{P}T}  \sum_{t=1}^T \boldsymbol{\nu}_k [t] \bxx_k^\her [t], \nonumber \\&= 
 \baa_{k}\!\brc{\bold{v}}\!\baa_k\!\brc{\bold{v}}^\trp\! \mR_{k,k}\!+\!\hspace{-.3cm}\sum_{\ell=1, \ell\neq k}^K \hspace{-.3cm}\baa_{k}\!\brc{\bold{v}} \baa_\ell\! \brc{\bold{v}}^\trp\!\mR_{\ell,k}
 \!+\!\sum_{t=1}^T\!\frac{\boldsymbol{\nu}_k [t] \bxx_k^\her [t]}{P T}, \nonumber
\end{align}
where the sample correlation matrix $\mR_{k,\ell}$ is defined as $\mR_{\ell,k} = \frac{1}{{P}T}\sum_{t=1}^T  \bxx_\ell [t] \bxx_k^\her [t].$ Due to the orthogonality of the transmit pulses, the match filtering suppresses the interference of other devices at device $k$. Precisely, with large choices of $T$, $\mR_{\ell,k} \to \boldsymbol{0}$ for $\ell \neq k$ as $T\to \infty$. This is shown in the following lemmas.
\begin{lemma}\label{lem:1}
	Let $\bss_k[t]$ be a zero-mean stationary process with auto-correlation matrix $ \tilde{\mR}_k [\tau] = \Ex{\bss_k[t] \bss_k^\her[t-\tau]}$ whose trace decays against $\tau$. As $T \rightarrow \infty$, for any $\ell\neq k$, we have
 \begin{subequations}
     \begin{align}
		\mR_{k,k} &\rightarrow \frac{1}{P} \tilde{\mR}_k[0], \\
		\mR_{\ell,k} &\rightarrow \boldsymbol{0}. \label{eq:eq2}
	\end{align}
 \end{subequations}
\end{lemma}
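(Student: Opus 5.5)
The plan is to write both sample correlation matrices explicitly in terms of the symbol processes and the pulses, and then apply a weak law of large numbers for stationary sequences with decaying correlation. By definition,
\begin{align*}
\mR_{\ell,k} = \frac{1}{PT}\sum_{t=1}^T \bss_\ell[t]\bss_k^\her[t]\, p_\ell[t]p_k^*[t].
\end{align*}
For $\ell=k$, the constraint $\abs{p_k[t]}^2=1$ removes the pulse, so $\mR_{k,k}=\frac{1}{PT}\sum_{t=1}^T \bss_k[t]\bss_k^\her[t]$. This is a time average of the stationary matrix process $\bss_k[t]\bss_k^\her[t]$, whose mean is $\tilde{\mR}_k[0]$.

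\textbf{Case $\ell=k$.} I will show mean-square convergence of $\mR_{k,k}$ to $\tilde{\mR}_k[0]/P$. The mean is exactly $\tilde{\mR}_k[0]/P$. The deviation has second moment
\begin{align*}
\frac{1}{P^2T^2}\sum_{t,t'}\mathrm{Cov}\brc{\bss_k[t]\bss_k^\her[t],\,\bss_k[t']\bss_k^\her[t']},
\end{align*}
measured in Frobenius norm. If the entries are Gaussian, or more generally have fourth-order moments controlled by second-order ones (a standard assumption for OTA symbols that I would state explicitly), then the covariance at lag $\tau=t-t'$ is bounded by a constant times $\norm{\tilde{\mR}_k[\tau]}^2$. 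The double sum then becomes $\frac{1}{T}\sum_{\abs{\tau}<T}(1-\abs{\tau}/T)\norm{\tilde{\mR}_k[\tau]}^2$, up to constants. Because the trace of $\tilde{\mR}_k[\tau]$ decays in $\tau$, this Ces\`aro-type average tends to zero. This is the usual mean-ergodic argument, and Chebyshev's inequality then gives convergence in probability.

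\textbf{Case $\ell\neq k$.} Here I first take expectations. Since $g_k^i$ and $g_\ell^j$ are independent and zero mean, $\Ex{\bss_\ell[t]\bss_k^\her[t]}=\mC_\ell\Ex{\bgg_\ell[t]\bgg_k^\her[t]}\mC_k^\her=\boldsymbol{0}$. So $\Ex{\mR_{\ell,k}}=\boldsymbol{0}$ for every $T$; orthogonality of the pulses is not even needed for the mean. For the variance, independence across devices factorizes the fourth moment:
\begin{align*}
\Ex{\norm{\mR_{\ell,k}}_{\rm F}^2}=\frac{1}{P^2T^2}\sum_{t,t'}p_\ell[t]p_k^*[t]p_\ell^*[t']p_k[t']\,\tr{\tilde{\mR}_\ell[t-t']\,\tilde{\mR}_k[t-t']^\her},
\end{align*}
up to the ordering of the lag arguments. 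The pulses are unimodular, so each term is bounded in modulus by $\abs{\tr{\tilde{\mR}_\ell[\tau]\tilde{\mR}_k[\tau]^\her}}$. The result is again $O\bigl(\frac{1}{T}\sum_{\abs{\tau}<T}\abs{\text{lag-}\tau\ \text{term}}\bigr)$, which vanishes under the decay assumption. Orthogonality $\sum_t p_k^*[t]p_\ell[t]=0$ handles the $\tau=0$-like, coherent part. Concretely, if the symbols were constant over the frame, the sum would be killed exactly by orthogonality, and this is the mechanism the paper points to. I would note this to connect with the matched-filter interpretation.

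\textbf{Main obstacle.} The hard step is turning ``trace of $\tilde{\mR}_k[\tau]$ decays'' into summability, or at least Ces\`aro-vanishing, of the lag terms. As literally stated, $R_k[\tau]\le R_k[0]$ is only boundedness and does not imply decay to zero. I would interpret the decay assumption as $\tilde{\mR}_k[\tau]\to\boldsymbol{0}$ as $\abs{\tau}\to\infty$, which is enough for the Ces\`aro mean to vanish, and state this reading explicitly. I would also bound the cross terms by Cauchy--Schwarz, $\abs{\tr{AB^\her}}\le\norm{A}_{\rm F}\norm{B}_{\rm F}$, together with $\norm{\tilde{\mR}_k[\tau]}_{\rm F}\le\norm{\mC_k}_{\rm F}^2\abs{R_k[\tau]}\le P\abs{R_k[\tau]}$, using the same-autocorrelation-across-tasks assumption. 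This reduces everything to the scalar autocorrelations $R_k[\tau]$.
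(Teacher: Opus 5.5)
Your proposal is correct under the assumptions you state, and those assumptions are genuinely needed. For the cross term you take a different route from the paper.

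There is one slip to fix first. In $\Ex{\norm{\mR_{\ell,k}}_{\rm F}^2}$ the factor $\bss_k^\her[t]\bss_k[t']$ is a scalar. After using independence, the lag term is therefore the product of traces $\tr{\tilde{\mR}_k[t'-t]}\,\tr{\tilde{\mR}_\ell[t-t']}=\norm{\mC_k}_{\rm F}^2\norm{\mC_\ell}_{\rm F}^2R_k[t'-t]R_\ell[t-t']$, not $\tr{\tilde{\mR}_\ell[t-t']\tilde{\mR}_k[t-t']^\her}$. It is still bounded by $P^2\abs{R_k[\tau]}\abs{R_\ell[\tau]}$, so your Ces\`aro conclusion stands. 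Also, your phrase ``$\tau=0$-like'' is misleading. Orthogonality does nothing to the diagonal $t=t'$ terms: they contribute $\norm{\mC_k}_{\rm F}^2\norm{\mC_\ell}_{\rm F}^2/(P^2T)$ whatever the pulses, since $\abs{p_\ell[t]p_k^*[t]}=1$. Orthogonality only cancels a lag-independent, coherent component.

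The paper computes the same double sum but proceeds differently. In step (b) of Appendix~A it replaces every lag term by its $\tau=0$ value, factors out $\tr{\tilde{\mR}_k[0]}\tr{\tilde{\mR}_\ell[0]}$, and uses $\sum_t p_\ell[t]p_k^*[t]=0$ to conclude that the variance is exactly zero for every $T$. For $\ell=k$ it simply invokes the weak law of large numbers on ``uncorrelated realizations.'' That route makes pulse orthogonality the operative mechanism, matching the matched-filter narrative.

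However, step (b) is not a valid bound. The weights $p_\ell[t_1]p_k^*[t_1]p_\ell^*[t_2]p_k[t_2]$ are complex, so replacing terms one by one controls nothing. The conclusion also fails at finite $T$: for temporally i.i.d.\ symbols, the variance is exactly the positive $1/T$ diagonal term above. The collapse to zero is exact only when the lag term does not depend on $\tau$ (e.g.\ symbols constant over the frame), which is precisely the case your decay-to-zero reading excludes.

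Your route gives a correct, explicit rate $O\bigl(\frac{1}{T}\sum_{\abs{\tau}<T}\abs{R_k[\tau]R_\ell[\tau]}\bigr)$ without using the pulses at all. Your fourth-moment assumption for $\ell=k$ also repairs a real gap in the paper's WLLN step, since uncorrelated $\bss_k[t]$ need not give uncorrelated $\bss_k[t]\bss_k^\her[t]$. If you want orthogonality to do real work in the cross term, write $R_k[\tau]R_\ell[-\tau]$ as its limiting constant plus a part tending to zero. Orthogonality cancels the constant part exactly, and your Ces\`aro bound handles the rest.
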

\begin{proof}
See \textbf{Appendix A}.
\end{proof}
Given the definition of $\bss_k[t]$, it is straightforward to show that its covariance matrix satisfies the decaying constraint stated in \textbf{Lemma~\ref{lem:1}}.  This is shown in \textbf{Lemma~\ref{lem:2}}. 

\begin{lemma}\label{lem:2}
	The random process $\bss_k[t]$ is a zero-mean stationary process with auto-correlation matrix $\tilde{\mR}_k[\tau] =  R_k[\tau] \mC_k\mC_k^\her$ with $R_k[\tau]$ denoting the common auto-correlation of local processes $g_k^i[t]$.
\end{lemma}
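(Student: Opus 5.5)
The plan is to compute the first two moments of $\bss_k[t]=\mC_k\bgg_k[t]$ directly from the statistical assumptions on the local parameters $g_k^i[t]$ stated in Section II-A, and to read off stationarity from them.

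\textbf{Mean.} By linearity of expectation,
\begin{align*}
\Ex{\bss_k[t]}=\mC_k\Ex{\bgg_k[t]}=\boldsymbol{0},
\end{align*}
because each $g_k^i[t]$ has zero mean. The precoder $\mC_k$ is deterministic and fixed over the frame, so it can be taken out of the expectation.

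\textbf{Auto-correlation.} Next I write
\begin{align*}
\Ex{\bss_k[t]\bss_k^\her[t-\tau]}=\mC_k\,\Ex{\bgg_k[t]\bgg_k^\her[t-\tau]}\,\mC_k^\her
\end{align*}
and evaluate the $I\times I$ inner matrix entrywise. Its $(i,j)$ entry is $\Ex{g_k^i[t]g_k^j[t-\tau]^*}$.
\begin{itemize}
\item On the diagonal ($i=j$), the entry equals $R_k[\tau]$. This uses stationarity of each task process together with the modelling assumption that the auto-correlation is common across tasks.
\item Off the diagonal ($i\neq j$), the entry vanishes because the parameters of different tasks are independent and zero-mean.
\end{itemize}
Hence $\Ex{\bgg_k[t]\bgg_k^\her[t-\tau]}=R_k[\tau]\mI_I$, and therefore $\tilde{\mR}_k[\tau]=R_k[\tau]\mC_k\mC_k^\her$. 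This depends only on $\tau$ and not on $t$, which together with the constant mean establishes (wide-sense) stationarity.

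\textbf{Main obstacle.} The one delicate point is the off-diagonal entries at lag $\tau\neq 0$. The paper literally states uncorrelatedness only at equal times, and only for pairs with $k\neq\ell$ and $i\neq j$. I would resolve this by invoking the stated independence of the task processes, since the parameters come from independent computations such as separate mini-batches. I would read this independence as holding between the whole processes $\{g_k^i[t]\}_t$ and $\{g_k^j[t]\}_t$, which gives zero cross-correlation at all lags.

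\textbf{Decay condition for Lemma~\ref{lem:1}.} Finally, taking the trace gives $\mathrm{tr}\,\tilde{\mR}_k[\tau]=R_k[\tau]\norm{\mC_k}_{\rm F}^2$. Since $R_k[\tau]\le R_k[0]$ by assumption, this trace also decays against $\tau$. This is exactly the hypothesis needed to apply Lemma~\ref{lem:1}, and it yields $\mR_{k,k}\to \frac{1}{P}\mC_k\mC_k^\her$ as $T\to\infty$.
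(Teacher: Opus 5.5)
Your proposal is correct and takes the same route as the paper's Appendix~B: you pull the deterministic $\mC_k$ out of the expectation and use $\Ex{\bgg_k[t]\bgg_k^\her[t-\tau]} = R_k[\tau]\mI$. Your entrywise justification of that identity makes explicit a step the paper asserts without comment, in particular your reading of task independence as holding between whole processes, so that cross-task correlations vanish at every lag $\tau$.
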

\vspace{-.2cm}
\begin{proof}
See \textbf{Appendix B}.
\end{proof}

We can now use \textbf{Lemma~\ref{lem:1}} to simplify the expression for sufficient statistics. We assume that $T$ is large enough, such that the limits of the law of large number hold. In this case, the sufficient statistics is approximately given by:
\begin{align}
	\hspace{-.2cm}\Xi_k \brc{\bold{v}} &\!\approx\!
 \baa_{k}\!\brc{\bold{v}} \baa_k\!\brc{\bold{v}}^\trp \frac{\mC_k\mC_k^\her}{P}
	\!+\! \frac{\varsigma_k}{\sqrt{PT}}
 \hat{\mZ}_k \brc{\frac{\mC_k\mC_k^\her}{P}}^{1/2}\hspace{-.3cm},
\end{align}
where $\hat{\mZ}_k \in\setC^{M\times M}$ is a zero-mean unit variance Gaussian matrix.
We can further whiten the noise term and compute the alternative sufficient statistics $\hat{\Xi}_k \brc{\bold{v}}$ as follows:
\begin{align}
	\hspace{-.3cm}\hat{\Xi}_k \brc{\bold{v}} = 
 \frac{\sqrt{PT}}{\varsigma_k}
 \Xi_k \brc{\bold{v}}\brc{\frac{\mC_k\mC_k^\her}{P}}^{-1/2}.
 \label{eq:suff_stat1}
\end{align}
Let define $\hat{\mA}_k \brc{\bvv}=\mA_k \brc{\bvv}\brc{\mC_k\mC_k^\her}^{1/2}$ where
 \begin{align}\label{eq:A_k}
    \mA_k \brc{\bvv} = \sqrt{\frac{T}{\varsigma_k^2}} \baa_{k} \brc{\bold{v}} \baa_k \brc{\bold{v}}^\trp,
\end{align}
then $\hat{\Xi}_k \brc{\bold{v}}$ can be rewritten as
\begin{align}
	\hat{\Xi}_k \brc{\bold{v}} = \hat{\mA}_k (\bvv)+ \hat{\mZ}_k,
 \label{eq:suff_stat}
\end{align}
which simplifies the likelihood expression \footnote{\color{black}The array response $\baa_k(\bold{v})$ is expressed as a deterministic function of the target's location through the wave-propagation model. As a result, $\hat{\mA}_k(\bvv)$ is fully specified by the estimation parameters and does not require separate CSI acquisition. Consequently, the estimation of $\hat{\mA}_k(\bvv)$ directly yields an estimate of the target parameters.}.
\color{blue}
\color{black}

 \subsection{Joint ML Estimation for OTA-FEEL-enabled Localization}
\label{sec:Localize}

Given the sufficient statistics $\hat{\Xi}_k $, the optimal unbiased estimate of the target coordinate $\bvv$ is obtained by maximizing the likelihood of location $\bvv$, i.e., $p\brc{\hat{\Xi}_1 \brc{\bold{v}}, \ldots, \hat{\Xi}_K \brc{\bold{v}}\vert \bvv}$. Thus, we can write 
\begin{align}
    \bvv^\star = \arg\max_{\bvv\in\setR^3} p\brc{\hat{\Xi}_1 \brc{\bold{v}}, \ldots, \hat{\Xi}_K \brc{\bold{v}} \vert \bvv },
\end{align}
where $\hat{\Xi}_k \brc{\bold{v}}$ is defined in \eqref{eq:suff_stat}. 
Since $\hat{\mZ}_k$ is \ac{iid} standard Gaussian, the likelihood function can be expressed as:
\begin{subequations}
	\begin{align}
 \hspace{-.3cm}p\brc{\hat{\Xi}_1 , \ldots, \hat{\Xi}_K  \vert \bvv } &= \prod_{k=1}^K p\brc{\hat{\Xi}_k  \vert \bvv }, \\
&\propto \prod_{k=1}^K \exp\left\{ - \norm{\hat{\Xi}_k - \hat{\mA}_k \brc{\bvv} }_F^2 \right \},
	\end{align}
\end{subequations}
 where we drop the argument $\bvv$  and denote $\hat{\Xi}_k\brc{\bvv}$ as $\hat{\Xi}_k$ for sake of brevity. Since the logarithm is an ascending function and that the likelihood is always non-zero, we can rewrite the joint \ac{ml} localization problem as 
\begin{subequations}
\begin{align}\label{eq:distLoc}
    \bvv^\star &= \arg\max_{\bvv\in\setR^3} \log p\brc{\hat{\Xi}_1, \ldots, \hat{\Xi}_K \vert \bvv },\\
    &= \arg\max_{\bvv\in\setR^3} \sum_{k=1}^K - \norm{\hat{\Xi}_k - \hat{\mA}_k \brc{\bvv} }_F^2.
\end{align}
\end{subequations}

Noting that device $k$ only knows its own location, it is evident that \eqref{eq:distLoc} describes a distributed optimization problem.
The distributed localization problem can be solved by invoking the federated averaging algorithm \cite{mcmahan2017communication}, and hence incorporated into the proposed multi-function computation framework. 
To illustrate this, let us denote the local loss of device $k$ in \ac{ml} localization as follows:
\begin{align}
	\ell_k (\bvv) = \norm{\hat{\Xi}_k - \hat{\mA}_k \brc{\bvv} }_F^2.
\end{align}
The optimization in \eqref{eq:distLoc} is equivalent to $\min_{\bvv\in\setR^3} \ell (\bvv)$, where $\ell (\bvv) = \sum_{k=1}^K \ell_k (\bvv)$. 
By interpreting $\ell_k (\bvv)$ and $\ell (\bvv)$ as local and global loss functions, respectively, we can frame this problem as a distributed learning task.  
\vspace{-.35cm}
\color{black}
\begin{remark}
In practical scenarios with finite-length pulses and/or imperfect orthogonality
due to impairments (such as   multipath fading, Carrier Frequency Offset,
hardware imperfections), the inter-device interference is not completely suppressed. This leads to an extra interference term in \eqref{eq:suff_stat1}, reducing the accuracy of the sufficient statistics and potentially introducing a bias and/or increasing the variance in the estimate of $v^\star$. To mitigate such impacts, one may consider using strategies such as long pulse duration, adopting improved pulse design, and interference-aware estimation methods in practice. The discussion on these approaches is out of the scope of this study and is left as a direction for future work.
\end{remark}
\color{black}
\vspace{-.35cm}
\subsection{Deriving the Cram\'er-Rao Bound}\label{crbMSE}
In radar sensing, we focus on target estimation performance using the \ac{crb}, which provides a lower bound on the variance of any unbiased estimator \cite{liu2021cramer} and is a widely used benchmark in radar systems. To this end, let $\hat{\mathbf{v}}$ represent the estimator of target location $\mathbf{v}$. The CRB matrix can then be given as the inverse of the Fisher Information Matrix (FIM)  \cite{kay1993statistical}, i.e., 
\begin{align}\label{crb}
    \crb \brc{\{\mC_k\} \vert \bvv} =\tr{\mJ^{-1}\brc{\bvv}},
\end{align}
with $\mJ\in\setC^{3\times 3}$ representing the FIM\footnote{This framework can be easily extended to cases with more than three estimation variables, accommodating a higher-dimensional $\bvv$ as required by different application scenarios.} defined as follows:
\begin{align}
    	\mJ\brc{\bvv} = -\Ex{ \nabla^2_{\bvv} \log p\brc{\hat{\Xi}_1 , \ldots, \hat{\Xi}_K  \vert \bvv } }.
\end{align}
 Moreover, we denote $\{\mC_k\}$ as an argument in \eqref{crb}, as the reflection signals depend on the precoders $\{\mC_k\}$ through the transmit signals $\{\bxx_k[t] \}$. 
 In general, $\crb\brc{\{\mC_k\} \vert \bvv}$ is a function of unknown $\bvv$, which is the target location. Noting that $\bvv$ is unknown, we consider the worst-case \ac{crb} as the sensing error metric, i.e., the maximum \ac{crb} against all possible choices of unknown $\bvv$. We call this metric worst-case \ac{crb} and denote it as follows:
\begin{align}
	\crb_{\text{worst}} \brc{\{\mC_k\}} = \max_{\bold{v}\in \setP} \crb \brc{\{\mC_k\} \vert \bvv}. \label{eq:worstcrb}
\end{align}
To derive $\crb   \brc{\{\mC_k\}} \vert \bvv)$, we derive the FIM in \textbf{Theorem-1}.
\begin{theorem}
	\label{th:1}
 Let matrix $\mA_k\brc{\bvv}$ be defined as in \eqref{eq:A_k}, and denote its derivative with respect to $\ell_i$ for $i\in\{x,y,z\}$, as $ \mA_k^i \brc{\bvv} = \frac{\partial}{\partial \ell_i}\mA_k \brc{\bvv}$. The FIM is then specified by a $3\times 3$ matrix whose entry $(i,j)$ is given by 
\begin{align}\label{fisher}
	[\mJ\brc{\bvv}]_{i,j} = 
\sum_{k=1}^K \tr{ \mA_k^i \brc{\bvv} {\mC_k\mC_k^\her} \mA_k^{j\her} \brc{\bvv}}
\end{align}
 where in this indexing $x \equiv 1$, $y \equiv 2$ and $z \equiv 3$. 
\end{theorem}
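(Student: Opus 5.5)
The plan is to start from the whitened model \eqref{eq:suff_stat}, $\hat{\Xi}_k = \hat{\mA}_k(\bvv) + \hat{\mZ}_k$, with $\hat{\mZ}_k$ i.i.d.\ standard complex Gaussian and independent across $k$. This makes the joint log-likelihood a sum over devices,
\begin{align*}
\log p\brc{\hat{\Xi}_1,\ldots,\hat{\Xi}_K \vert \bvv} = \mathrm{const} - \sum_{k=1}^K \norm{\hat{\Xi}_k - \hat{\mA}_k(\bvv)}_{\rm F}^2 .
\end{align*}
Because of this, the FIM is the sum of the per-device FIMs. It therefore suffices to compute, for a single $k$, the quantity $-\Ex{\partial_{\ell_i}\partial_{\ell_j} \ell_k(\bvv)}$ with $\ell_k(\bvv)=\norm{\hat{\Xi}_k-\hat{\mA}_k(\bvv)}_{\rm F}^2$. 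This is the standard Gaussian-mean-in-white-noise computation.

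\textbf{Step 1: first derivatives.} Write $\ell_k = \tr{(\hat{\Xi}_k-\hat{\mA}_k)(\hat{\Xi}_k-\hat{\mA}_k)^\her}$ and differentiate with respect to the real variable $\ell_i$. Since $\mC_k$ does not depend on $\bvv$, we have $\partial_{\ell_i}\hat{\mA}_k = \mA_k^i(\bvv)(\mC_k\mC_k^\her)^{1/2}$. This gives
\begin{align*}
\partial_{\ell_i}\ell_k = -2\,\mathrm{Re}\,\tr{\hat{\mZ}_k^\her\, \mA_k^i(\bvv)(\mC_k\mC_k^\her)^{1/2}} ,
\end{align*}
after substituting $\hat{\Xi}_k-\hat{\mA}_k=\hat{\mZ}_k$.

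\textbf{Step 2: second derivatives and expectation.} Differentiating again produces two terms. The first is $-2\,\mathrm{Re}\,\tr{\hat{\mZ}_k^\her \partial_{\ell_j}\partial_{\ell_i}\hat{\mA}_k}$, which has zero mean because $\hat{\mZ}_k$ has zero mean. The second is $2\,\mathrm{Re}\,\tr{(\partial_{\ell_j}\hat{\mA}_k)^\her \partial_{\ell_i}\hat{\mA}_k}$. Taking the expectation and using cyclicity of the trace together with $(\mC_k\mC_k^\her)^{1/2}(\mC_k\mC_k^\her)^{1/2}=\mC_k\mC_k^\her$, the surviving term becomes $\tr{\mA_k^i(\bvv)\mC_k\mC_k^\her \mA_k^{j\her}(\bvv)}$. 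The factor of $2$ from the derivative is then absorbed into the complex-Gaussian normalization. Equivalently, one can use the outer-product form $\mJ_{ij}=\Ex{\partial_i \log p\,\partial_j \log p}$ with $\Ex{\mathrm{Re}\,\tr{\hat{\mZ}^\her X}\mathrm{Re}\,\tr{\hat{\mZ}^\her Y}} = \tfrac{1}{2}\mathrm{Re}\,\tr{X^\her Y}$ for unit-variance circular entries. That route gives the same expression and fixes the constant. Summing over $k$ then yields \eqref{fisher}.

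\textbf{Main obstacle.} The delicate point is the bookkeeping of constants and the real part. The stated formula omits both the factor of $2$ and $\mathrm{Re}$. I would handle this by adopting the paper's normalization, in which the density is $\propto\exp\{-\norm{\cdot}_{\rm F}^2\}$, the entries of $\hat{\mZ}_k$ have unit variance, and $\mJ$ is viewed as a Hermitian matrix in $\setC^{3\times3}$. The $(i,j)$ entry is then understood up to taking its Hermitian/real part, which is what enters $\tr{\mJ^{-1}}$. I would also note that $\mA_k^i$ is obtained by differentiating both $\alpha_k$ and the phase $\theta_k(\bvv)$ in \eqref{array_resp}. Its explicit form is not needed for the theorem, only its existence.
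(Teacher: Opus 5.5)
Your derivation follows the paper's route: write the joint log-likelihood as $-\sum_k\norm{\hat{\Xi}_k-\hat{\mA}_k(\bvv)}_F^2$, then apply the chain rule twice. Your Steps 1--2 are more careful than the paper. The paper asserts that the Hessian itself equals $-\sum_k\tr{\mA_k^i\mC_k\mC_k^\her\mA_k^{j\her}}$, and so silently drops three things: the zero-mean term $2\,\mathrm{Re}\,\tr{\hat{\mZ}_k^\her\partial_{\ell_i}\partial_{\ell_j}\hat{\mA}_k}$, the factor $2$, and the real part. The gap is in your last step, where you try to make your (correct) result coincide with \eqref{fisher}.

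\textbf{The factor $2$.} Under the normalization you explicitly adopt (density $\propto\exp\{-\norm{\cdot}_F^2\}$, unit-variance circular entries of $\hat{\mZ}_k$), the factor $2$ cannot be ``absorbed''. The score is $2\,\mathrm{Re}\,\tr{\hat{\mZ}_k^\her X_i}$ with $X_i=\mA_k^i(\bvv)(\mC_k\mC_k^\her)^{1/2}$. Your own identity then yields $4\cdot\tfrac12\,\mathrm{Re}\,\tr{X_i^\her X_j}$. So the outer-product route confirms the factor $2$ instead of removing it. It would disappear only if the entries of $\hat{\mZ}_k$ had variance $2$, which contradicts \eqref{eq:suff_stat}.

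\textbf{The real part.} Dropping it is not cosmetic either. Since $\bvv$ is a real parameter, its FIM is real symmetric. For $i\neq j$, however, the trace in \eqref{fisher} is complex in general: the derivatives of the amplitude $\alpha_k$ and of the phase in \eqref{array_resp} enter with a relative factor of the imaginary unit. For a positive definite Hermitian $G$ with nonzero imaginary part, $\tr{G^{-1}}>\tr{(\mathrm{Re}\,G)^{-1}}$. So reading the entries ``up to the real part'' changes the CRB.

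\textbf{What you actually prove.} Your computation establishes
$[\mJ(\bvv)]_{i,j}=2\sum_{k=1}^K\mathrm{Re}\,\tr{\mA_k^i(\bvv)\mC_k\mC_k^\her\mA_k^{j\her}(\bvv)}$.
The statement, and the paper's proof, are correct only up to this factor and real part. You should say so explicitly rather than argue the discrepancy away. The discrepancy is harmless downstream. Theorem~\ref{thm:3} uses only $\tr{\mJ}$, whose summands are the real nonnegative diagonal entries, and the factor $2$ is absorbed into $\varrho$.
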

\begin{proof}
To derive the FIM, note that conditioned on $\bvv$, the matrices $\hat{\Xi}_1, \ldots, \hat{\Xi}_K$ are \ac{iid} Guassian. We can hence write
\begin{subequations}
    \begin{align}
 &p\brc{\hat{\Xi}_1 , \ldots, \hat{\Xi}_K  \vert \bvv } = \prod_{k=1}^K p\brc{\hat{\Xi}_k  \vert \bvv }, \\[-10pt]
		&\propto \prod_{k=1}^K \exp\{ - \norm{\hat{\Xi}_k - \mA_k \brc{\bvv} \brc{\mC_k\mC_k^\her}^{1/2} }_F^2 \}.
	\end{align}
\end{subequations}
	Using the identity $\vert \mathbf{X}\vert_F^2=\tr{\mathbf{X}^\her \mathbf{X}}$, we conclude  equation \eqref{eq-top} on the top of next page.
  \begin{figure*}[t]
	\begin{align}\label{eq-top}
 \log p\brc{\hat{\Xi}_1 , \ldots, \hat{\Xi}_K  \vert \bvv }=
   - \sum_{k=1}^K 
   \tr{\brc{\hat{\Xi}_k - \mA_k \brc{\bvv} \brc{\mC_k\mC_k^\her}^{1/2}}^\her \brc{\hat{\Xi}_k - \mA_k \brc{\bvv} \brc{\mC_k\mC_k^\her}^{1/2}}}.
	\end{align}
  \hrule
  \end{figure*}
	Recall that $\bvv = [\ell_x,\ell_y,\ell_z]^\trp$. Using chain rule, it is readily shown that for $i,j \in \{x,y,z\}$
	\begin{align}
 \frac{\partial^2}{\partial \ell_i \partial \ell_j}\log p&\brc{\hat{\Xi}_1 , \ldots, \hat{\Xi}_K  \vert \bvv }\nonumber
\\[-10pt]& =
- \sum_{k=1}^K 
   \tr{\mA_k^i \brc{\bvv} \brc{\mC_k\mC_k^\her} \mA_k^{j\her} \brc{\bvv}},\nonumber
	\end{align}
 which concludes the proof.
\end{proof}



Computing the \ac{crb} involves inverting the matrix \eqref{fisher} as established in \textbf{Theorem~\ref{th:1}}. However, obtaining an analytic solution for this inversion will not always be feasible. To address this, we seek an alternative approach to establish a lower bound for the worst-case CRB. We begin with \textbf{Lemma~\ref{lem:3}} given below that provides an upper bound on each entry of the FIM.
Using this result, we derive in \textbf{Theorem~\ref{thm:3}} a universal lower bound for the CRB that does not require explicit inversion of the FIM, thus providing an efficient alternative to the expression in \textbf{Theorem~\ref{th:1}}.
\vspace{-0.4cm}
\begin{lemma}\label{lem:3}
    For entry $(i,j)$ of the FIM, we have
    \begin{align*}
		[\mJ\brc{\bvv}]_{i,j} \leq  
  \mathbb{E}\brc{\{\mC_k\}} \sum_{k=1}^K 
   \norm{ \mA_k^i\brc{\bvv}}_F \norm{\mA_k^{j} \brc{\bvv}}_F,
	\end{align*}
 where $\mathbb{E}\brc{\{\mC_k\}} =  {\sum_{k=1}^K 
   \tr{ {\mC_k\mC_k^\her} }} = \sum_{k=1}^K \norm{\mC_k}_F^2$.
\end{lemma}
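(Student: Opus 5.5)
Starting from the expression for $[\mJ\brc{\bvv}]_{i,j}$ in \textbf{Theorem~\ref{th:1}}, I will bound each summand separately and then collect a common factor depending only on the precoders. Strictly, the FIM entry should be read as its real part, or its modulus, since the log-likelihood is real. I will therefore bound $\abs{\tr{\mA_k^i \brc{\bvv} \mC_k\mC_k^\her \mA_k^{j\her} \brc{\bvv}}}$, which also bounds the entry itself.

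For a fixed $k$, I use the cyclic property of the trace to write the summand as $\tr{\mC_k\mC_k^\her \mA_k^{j\her}\brc{\bvv} \mA_k^i\brc{\bvv}}$. I then apply the trace inequality $\abs{\tr{\mathbf{X}\mathbf{Y}}} \le \norm{\mathbf{X}}_F \norm{\mathbf{Y}}_F$ (Cauchy--Schwarz in the Frobenius inner product), or alternatively $\abs{\tr{\mathbf{P}\mathbf{Y}}} \le \tr{\mathbf{P}}\,\norm{\mathbf{Y}}_2$ for positive semidefinite $\mathbf{P}$. The second version is sharper and better suited here. With $\mathbf{P} = \mC_k\mC_k^\her \succeq 0$ it gives
\begin{align*}
\abs{\tr{\mC_k\mC_k^\her \mA_k^{j\her} \mA_k^i}} \le \tr{\mC_k\mC_k^\her}\,\norm{\mA_k^{j\her}\mA_k^i}_2 \le \norm{\mC_k}_F^2 \norm{\mA_k^i}_F\norm{\mA_k^j}_F ,
\end{align*}
using submultiplicativity of the spectral norm and $\norm{\cdot}_2 \le \norm{\cdot}_F$. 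If I prefer to stay with Frobenius norms only, I can instead use $\norm{\mathbf{P}}_F \le \tr{\mathbf{P}}$ for positive semidefinite $\mathbf{P}$, which yields the same right-hand side.

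Finally, I sum over $k$ and use $\norm{\mC_k}_F^2 \le \sum_{k'=1}^K \norm{\mC_{k'}}_F^2 = \mathbb{E}\brc{\{\mC_k\}}$, which holds because every term is nonnegative. This factors out the common constant and gives exactly the claimed bound, with $\sum_k \norm{\mA_k^i}_F\norm{\mA_k^j}_F$ remaining.

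The only delicate point is the step that separates $\mC_k\mC_k^\her$ from the array-derivative matrices. It needs the positive-semidefinite trace inequality rather than a naive Frobenius Cauchy--Schwarz, because the latter would produce $\norm{\mC_k\mC_k^\her}_F$. That quantity must then be compared with $\tr{\mC_k\mC_k^\her}$, which is fine since the eigenvalues are nonnegative. The looseness of replacing $\norm{\mC_k}_F^2$ by the total sum is deliberate, since it makes the bound depend on the precoders only through a single scalar.
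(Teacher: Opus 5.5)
Your proposal is correct and follows essentially the same route as the paper: cyclic permutation of the trace, a positive-semidefinite trace inequality to pull out $\tr{\mC_k\mC_k^\her}$, submultiplicativity of the matrix norm, and then bounding each $\norm{\mC_k}_F^2$ by the total $\sum_{k'}\norm{\mC_{k'}}_F^2$ using nonnegativity of all terms. Two details are small refinements that make rigorous what the paper leaves unstated: you use the spectral norm in the intermediate step, where the paper uses $\norm{\mA_k^{j\her}\mA_k^i}_F$ directly, and you treat the possibly complex entry explicitly through its modulus.
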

\vspace{-0.2cm}
\begin{proof}
We start the proof by noting that ${\mC_k\mC_k^\her}$ is positive semi-definite. We hence can write
    \begin{subequations}
        \begin{align}
		\hspace{-.45cm}[\mJ\brc{\bvv}]_{i,j} &= 
  \sum_{k=1}^K 
   \tr{ \mA_k^i \brc{\bvv} {\mC_k\mC_k^\her} \mA_k^{j\her} \brc{\bvv}},\\[-4pt]
   &= \sum_{k=1}^K 
   \tr{  {\mC_k\mC_k^\her} \mA_k^{j\her} \brc{\bvv}\mA_k^i \brc{\bvv}},
   \end{align}
   \begin{align}
   \phantom{\hspace{-.5cm}[\mJ\brc{\bvv}]_{i,j}}
   &\leq \sum_{k=1}^K 
   \tr{  {\mC_k\mC_k^\her} } \norm{\mA_k^{j\her}\brc{\bvv}\mA_k^i \brc{\bvv}}_F,\\[-3pt]
   &\hspace{-.15cm}\leq \sum_{k=1}^K 
   \tr{  {\mC_k\mC_k^\her} } \norm{\mA_k^{j}\brc{\bvv}}_F \norm{\mA_k^i \brc{\bvv}}_F.
	\end{align}
    \end{subequations}
 As all arguments under the summand are positive, we can use Jenesen's inequality to write
     \begin{align}
		\sum_{k=1}^K&\tr{  {\mC_k\mC_k^\her} } \norm{\mA_k^{j}\brc{\bvv}}_F \norm{\mA_k^i \brc{\bvv}}_F\\[-5pt]
   &\leq \brc{\sum_{k=1}^K 
   \tr{  {\mC_k\mC_k^\her} }} \sum_{k=1}^K \norm{\mA_k^{j}\brc{\bvv}}_F \norm{\mA_k^i \brc{\bvv}}_F, \nonumber
	\end{align}
 which concludes the proof.	
\end{proof}
\color{black}

\begin{theorem}[Universal bound on sensing error]\label{thm:3}There exists a constant $\varrho > 0$, such that for any $\{\mC_k \}$, we have 
 \begin{align}
     \crb_{\mathrm{worst}}  \brc{\{\mC_k\}} \geq \crb_L \brc{\{\mC_k\}},
 \end{align}
 where $E\brc{\{\mC_k\}} = \sum_{k=1}^K \norm{\mC_k}_F^2$, $\frac{1}{\bar{\varsigma}^2} = \sum_{k=1}^K \frac{1}{\varsigma_k^2}$,
 and 
 \begin{align*}
     \crb_L \brc{\{\mC_k\}}=\frac{\varrho \bar{\varsigma}^2 }{T E\brc{\{\mC_k\}}}.
 \end{align*}
\end{theorem}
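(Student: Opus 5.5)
The plan is to lower bound $\tr{\mJ^{-1}(\bvv)}$ through a scalar quantity that Lemma~\ref{lem:3} controls, namely the trace of the FIM. Since $\mJ(\bvv)$ is a $3\times 3$ Hermitian positive definite matrix with eigenvalues $\lambda_1,\lambda_2,\lambda_3>0$, the Cauchy--Schwarz (or AM--HM) inequality gives
\begin{align*}
\tr{\mJ^{-1}(\bvv)} = \sum_{n=1}^3 \frac{1}{\lambda_n} \geq \frac{9}{\sum_{n=1}^3 \lambda_n} = \frac{9}{\tr{\mJ(\bvv)}}.
\end{align*}
This removes the need to invert the FIM, which is exactly the obstacle noted before Lemma~\ref{lem:3}.

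Next, I would bound the diagonal entries using Lemma~\ref{lem:3} with $i=j$. This gives
\begin{align*}
\tr{\mJ(\bvv)} \leq E\brc{\{\mC_k\}} \sum_{k=1}^K \sum_{i\in\{x,y,z\}} \norm{\mA_k^i(\bvv)}_F^2 .
\end{align*}
By \eqref{eq:A_k}, $\mA_k^i(\bvv) = \sqrt{T/\varsigma_k^2}\,\frac{\partial}{\partial \ell_i}\brc{\baa_k(\bvv)\baa_k(\bvv)^\trp}$, so $\norm{\mA_k^i}_F^2 = \frac{T}{\varsigma_k^2} G_k^i(\bvv)$. Here $G_k^i(\bvv) = \norm{\partial_{\ell_i}(\baa_k\baa_k^\trp)}_F^2$ depends only on geometry through $\alpha_k$, $\theta_k$, $\lambda$, $d$ and $M$, and not on the precoders or the noise. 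Defining $\gamma_k(\bvv) = \sum_i G_k^i(\bvv)$, we obtain $\tr{\mJ(\bvv)} \leq T E\brc{\{\mC_k\}} \sum_k \gamma_k(\bvv)/\varsigma_k^2$.

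The step I expect to be the main obstacle is turning $\sum_k \gamma_k(\bvv)/\varsigma_k^2$ into something proportional to $1/\bar{\varsigma}^2$ uniformly. The plan is to use the worst-case maximization. First I would pick a point, or restrict attention to a region, of the feasible set $\setP$. On that set each $\gamma_k(\bvv)$ is bounded above by a finite constant $\gamma_{\max}$. This needs $\setP$ to be compact and to keep the target away from the device locations $\bvv_k$, so that $\alpha_k$ and the derivatives of $\sin\theta_k$ stay bounded; I would state this as a standing assumption. Then $\sum_k \gamma_k/\varsigma_k^2 \leq \gamma_{\max}/\bar{\varsigma}^2$. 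Because the worst-case CRB is at least the CRB at any particular $\bvv\in\setP$, we get
\begin{align*}
\crb_{\mathrm{worst}}\brc{\{\mC_k\}} \geq \frac{9\,\bar{\varsigma}^2}{\gamma_{\max}\, T\, E\brc{\{\mC_k\}}},
\end{align*}
and we set $\varrho = 9/\gamma_{\max}$. This constant is positive and independent of $\{\mC_k\}$.

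Finally, I would check two details. The first is that $\mJ$ is invertible, so that the CRB is well defined; if it is singular, the CRB is infinite and the bound holds trivially. The second is that the trace identity extends correctly to the Hermitian, possibly complex, case, in which the diagonal entries are real and nonnegative.
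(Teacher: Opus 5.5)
Your proposal is correct and follows the paper's proof in Appendix~C essentially step for step. Both arguments use the inequality $\tr{\mJ^{-1}\brc{\bvv}} \geq 9/\tr{\mJ\brc{\bvv}}$, apply Lemma~\ref{lem:3} to the diagonal entries, bound the squared Frobenius norms of the array-response derivatives uniformly over a compact region away from the device locations, and use $\crb_{\mathrm{worst}} \geq \crb\brc{\cdot \,|\, \bvv}$, and your $\varrho = 9/\gamma_{\max}$ is exactly the paper's choice $\varrho = 9\epsilon$.
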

\vspace{-.3cm}
\begin{proof}
See \textbf{Appendix C}.
\end{proof}

This bound is universal in the sense that it does not depend on position $\bvv$, and hence can be used for system design. Interestingly, the bound indicates that the sensing error is reversely proportional to $\norm{\mC_k}_F^2$. This is intuitive, as by increasing $\norm{\mC_k}_F^2$, the \ac{snr} at the devices grows, leading to better estimation of $\bvv$. In the sequel, we adopt this bound as the sensing metric for \ac{moop} in the next section.

\section{MOOP Formulation and Solution }\label{moop2}
Our ultimate goal is to design  $\{\mC_k\}$ and $\mM[t]$, for $k=1,\ldots, K$ and $t=1,\ldots, T$, such that the sensing and aggregation perform efficiently in the considered OTA-FEEL-enabled collaborative ISAC framework. Given the performance metrics derived in the previous section, we formulate the following multi-objective optimization problem:
\begin{subequations}
\begin{align}
	\mathrm{MOOP 1:} &\min_{ \mM[t], \{\mC_k\} } \{ \overline{\mse}\brc{\mM[t], \{\mC_k\} }, \crb_L \brc{\{\mC_k\}} \}\nonumber\\
	&\text{subject to } \; \;  \norm{\mC_k}_{\rm F}^2 \leq P, \; \text{for } k=1,\ldots,\!K,\\
 &\phantom{\text{subject to }} \; \; \; \crb_L \brc{\{\mC_k\}} \leq \varepsilon_0,\\
 &\phantom{\text{subject to }} \; \; \; \overline{\mse}\brc{\mM[t], \{\mC_k\} } \leq \varepsilon_1. 
\end{align}
\end{subequations}
In this problem, two objectives are to be minimized under the transmit power constraints of the devices, sensing, and aggregation quality constraints. \textcolor{black}{
Intuitively, $\mathrm{MOOP 1}$ suppresses destructive impacts of the channel by minimizing both the aggregation and sensing error.} Due to the competing nature of these objectives, it is not possible to optimize both simultaneously: the first objective aims to find a precoder that beamforms towards the \ac{ps} and hence minimizes computation \ac{mse}, while the second objective aims to push the beamforming vectors towards the target, such that the sensing error reduces. The classical approach is to define the so-called Pareto front, which consists of \textit{Pareto-optimal} solutions 
\cite{marler2004survey, miettinen1999nonlinear}.
\vspace{-.3cm}
\begin{definition}
The beamforming matrix design $\{ \mC_k^\star \}$ for $k\in[K]$ is Pareto
optimal if and only if there exists no other design $\mC_k'$
satisfying $\overline{\mse}\brc{\mM[t], \{\mC_k'\} } \leq \overline{\mse}\brc{\mM[t], \{\mC_k^\star\} }$ and $\crb_L \brc{\{\mC_k'\}} <\crb_L \brc{\{\mC_k^\star\}}$, simultaneously.
\end{definition}
Intuitively, a Pareto-optimal solution is a design that is optimal to a \textit{combined} form of those two objectives. The Pareto-front is then the collection of all these optimal solutions. 
Similar to \ac{soop}, Pareto-optimal solutions are either local or global: if the functions and constraints are convex, the global Pareto-optimal solution is feasible. In the non-convex case, however, algorithmic approaches can compute local Pareto-optimal solutions \cite{miettinen1999nonlinear}. 
Various methods exist to determine the Pareto front of an MOOP, among which we can name the Tchebycheff method, scalarization, and $\epsilon$-constraint method. In the sequel, we adopt the $\epsilon$-constraint method, which allows us to convert the MOOP problem into a parameterized SOOP and facilitates the exploration of the Pareto front.

\subsection{MOOP via $\varepsilon$-constraint method}
A classical approach to jointly optimize the two objectives is the $\varepsilon$-constraint method, where one objective is optimized while keeping other objectives within tolerable thresholds. In our setting, we minimize the \ac{mse} objective while constraining the CRB objective to remain within a permissible threshold. This approach aligns with PS's primary role of estimating the aggregated model, with localization as an auxiliary task. Thus, we address the following constrained minimization problem:
\begin{subequations}
\label{eq:KOLLI}
\begin{align}
	&\min_{ \mM[t], \{\mC_k\} } \overline{\mse}\brc{\mM[t], \{\mC_k\} }, \nonumber \\
	&\text{subject to } \; \;  \norm{\mC_k}_{\rm F}^2 \leq P, \; \; \text{for } k=1,\ldots, K,\\
 &\phantom{\text{subject to }} \; \; \; \crb_L \brc{\{\mC_k\}} \leq \varepsilon_0,\\
 &\phantom{\text{subject to }} \; \; \; \overline{\mse}\brc{\mM[t], \{\mC_k\} } \leq \varepsilon_1, \label{eq:CONS_new}
\end{align}
\end{subequations}
where $\varepsilon_0$ and $\varepsilon_1$ that are proportional to the maximum tolerable sensing and aggregation error, respectively. Noting that the $\varepsilon$-constraint \eqref{eq:CONS_new} imposes an upper-bound on the objective function that is being minimized, it is straightforward to conclude that \eqref{eq:KOLLI} is mathematically equivalent to an optimization in which \eqref{eq:CONS_new} is dropped, i.e., the minimal objective is either smaller than $\varepsilon_1$, which makes \eqref{eq:CONS_new} inactive, or larger than $\varepsilon_1$ which makes the feasible set empty. We hence drop the $\varepsilon$-constraint \eqref{eq:CONS_new} in the sequel and assume that $\varepsilon_1$ is always chosen such that the feasible set is non-empty.
Using the definition of $\crb_L \brc{\{\mC_k\}}$, we can rewrite the optimization problem in \eqref{eq:KOLLI} with \eqref{eq:CONS_new} dropped as follows
\begin{subequations}
\begin{align}
	\hspace{-.35cm}\maP_1: &\min_{ \mM[t], \{\mC_k\} } \overline{\mse}\brc{\mM[t], \{\mC_k\} }, \nonumber \\[-5pt]
	&\text{subject to } \; \;  \norm{\mC_k}_{\rm F}^2 \leq P, \; \; \text{for } k=1,\ldots, K,\\[-4pt]
 &\phantom{\text{subject to }} \; \; \; 
    \sum_{k=1}^K \norm{\mC_k}_{\rm F}^2 \geq \varepsilon^{-1},
\end{align}
\end{subequations}
where $\varepsilon=\frac{\varepsilon_0 T}{\varrho \bar{\varsigma}^2}$. The problem $\maP_1$ is non-convex in terms of $\mM[t]$ and $\{\mC_k\}$.
\color{black}
To address this, we employ a \ac{bcd} approach, decomposing $\maP_1$ into two marginal subproblems: $\mathcal{M}_1$, which optimizes $\mM[t]$ for fixed $\mC_k$, and $\mathcal{M}_2$, which optimizes $\mC_k$ for fixed $\mM[t]$. These subproblems are solved iteratively, where the solution of each subproblem updates the variables for the other until convergence, leading to a balanced solution for $\maP_1$. The corresponding marginal problems are formulated as:\color{black}
\begin{subequations}
\begin{align}
\mathcal{M}_1: &\min_{ \mM[t] } \overline{\mse}\brc{\mM[t], \{\mC_k\} }, \nonumber
\end{align}
and
\begin{align}
\hspace{-.15cm}\mathcal{M}_2: &\min_{ \{\mC_k\} } \overline{\mse}\brc{\mM[t], \{\mC_k\} }, \nonumber\\
	&\text{subject to } \; \;  \norm{\mC_k}_{\rm F}^2 \leq P \; \; \text{for } k=1,\ldots, K,\\
 &\phantom{\text{subject to }} \; \; \; 
    \sum_{k=1}^K \norm{\mC_k}_{\rm F}^2 \geq \varepsilon^{-1}.
\end{align}
\end{subequations}
\color{black}The first marginal problem, $\mathcal{M}_1$, minimizes the MSE with respect to the PS’s receive beamforming matrix $\mM[t]$, treating $\mC_k$ as fixed. The solution to this subproblem is determined analytically in the next section. The second marginal problem, $\mathcal{M}_2$, optimizes the device transmit beamformers $\mC_k$ to minimize MSE while ensuring power constraints and a lower bound on sensing accuracy. This marginal problem is non-convex, making it challenging to solve directly. To overcome this, we reformulate it into a variational form and solve it using iterative projection, as detailed in Section IV.C.\color{black}
\begin{algorithm}[t]
\caption{MOOP1 Solution} \label{alg:1}
\begin{algorithmic}[1]
    \Require $\mH_k$, $\mW_k$, and $p_k[t]$ for $k\in [K]$, power $P$, noise variance $\sigma^2$, and threshold $\varepsilon$
    \While{not converged}\hspace{-.5cm}\Comment{Using some convergence criteria}
    \For{$t=1:T$}
    \State Compute $\mM^\star [t]$ via $\eqref{eq:M_star}$
    \EndFor
    \State Compute $\alpha_k$ and $\mC_k$ for $k \in [K]$ by solving $\mathcal{M}_\star$\Comment{Convex programming}
    \For{$k=1:K$}
    \State Set $\mC^\star_k \leftarrow \sqrt{\alpha_k}\mC_k/\norm{\mC_k}_F$
    \EndFor
    \EndWhile
    \Ensure $\mM^\star [t]$ and $\mC^\star_k$
\end{algorithmic}
\end{algorithm}

\subsection{First Marginal Problem}
The marginal problem $\mathcal{M}_1$ is a standard quadratic problem. Considering $\overline{\mse}\brc{\mM[t], \{\mC_k\} }$, $\mathcal{M}_1$ is rewritten as follows:
\begin{align}
    \min_{ \mM[t] } \frac{1}{T} \sum_{t=1}^T \sum_{k=1}^{K}  
 &\norm{ {\mM^\her [t] \mH_k \mC_k   p_k[t]\!-\!\mW_k } }_F^2\nonumber\!+\!\frac{\sigma^2}{T} \sum_{t=1}^T\!\norm{\mM [t]}_{\rm F}^2 ,
\end{align}
which reduces to a sequence of $T$ decoupled problems for $\mM[1], \ldots, \mM[T]$. For $t\in [T]$, this problem reads
\begin{align}
    &\min_{ \mM[t] }  \sum_{k=1}^{K}  
 \norm{ {\mM^\her [t] \mH_k \mC_k   p_k[t] - \mW_k } }_F^2
 + \sigma^2 \norm{\mM [t]}_{\rm F}^2,\nonumber
\end{align}
whose solution is given by:
\begin{align}\label{eq:M_star}
\mM^\star [t]\!=\!\brc{\!\sum_{k=1}^{K}  
  \mH_k \mC_k \mC^\her_k \mH^\her_k\!+\! \sigma^2 \mI_N\!}^{\!-1}\!\!\!\!\sum_{k=1}^K p_k[t] \mH_k \mC_k \mW_k^\her.
\end{align}
\vspace{-0.5cm}
\subsection{Second Marginal Problem}
The second marginal problem minimizes the objective with respect to $\{ \mC_k\}$. We can rewrite $\mathcal{M}_2$ as follows:
\begin{subequations}
\begin{align}
&\min_{ \{\mC_k\} } \frac{1}{T} \sum_{t=1}^T \sum_{k=1}^{K} 
 \norm{ {\mM^\her [t] \mH_k \mC_k   p_k[t] - \mW_k } }_F^2, \nonumber\\
	&\text{subject to } \; \;  \norm{\mC_k}_{\rm F}^2 \leq P, \; \; \text{for } k=1,\ldots, K,\\[-3pt]
 &\phantom{\text{subject to }} \; \; \; 
    \sum_{k=1}^K \norm{\mC_k}_F^2 \geq \varepsilon^{-1}.
\end{align}
\end{subequations}
Unlike $\mathcal{M}_1$, this problem is not convex as the objective represents a difference-of-convex functions which is challenging to solve \cite{boyd2004convex}. To tackle this problem, we thus perform the following steps:
1) We transform the problem into variational form using an auxiliary variable. 2) We relax the variational problem and use iterative projection to solve the problem.
These two steps are explained below.

\textit{1) Variational Form:} Let us define the auxiliary variable $\alpha_k = \norm{\mC_k}_F^2$. Hence, the marginal problem is rewritten as:
\begin{subequations}
\begin{align}
&\min_{ \{\alpha_k,\mC_k\} } \frac{1}{T} \sum_{t=1}^T \sum_{k=1}^{K} 
 \norm{ {\mM^\her [t] \mH_k \mC_k   p_k[t] - \mW_k } }_F^2, \nonumber\\[-3pt]
	&\text{subject to } \; \;  \alpha_k - P \leq 0 \; \; \text{for } k=1,\ldots, K,\\[-3pt]
 &\phantom{\text{subject to }} \; \; \; 
    \varepsilon^{-1} - \sum_{k=1}^K \alpha_k \leq 0, \\[-2pt]
    &\phantom{\text{subject to }} \; \; \; 
    \norm{\mC_k}_F^2 - \alpha_k = 0 \; \; \text{for } k=1,\ldots, K.
\end{align}
\end{subequations}
In this variational form, the first two constraints are linear that can be handled by linear programming. However, the last constraint imposes a \textit{unit-modulus} restriction on the problem.

\textit{2) Iterative Projection:} To approximate the solution efficiently, we relax the problem into
\begin{subequations}
\begin{align}
\mathcal{M}_\star: &\min_{ \{\alpha_k, \mC_k\} } \frac{1}{T} \sum_{t=1}^T \sum_{k=1}^{K} 
 \norm{ {\mM^\her [t] \mH_k \mC_k   p_k[t] - \mW_k } }_F^2,\nonumber \\[-3pt]
	&\hspace{-.2cm}\text{subject to } \; \;  \alpha_k - P \leq 0, \; \; \text{for } k=1,\ldots, K,\\[-3pt]
 &\hspace{-.4cm}\phantom{\text{subject to }} \; \; \; 
    \varepsilon^{-1} - \sum_{k=1}^K \alpha_k \leq 0, \\[-3pt]
    &\hspace{-.4cm}\phantom{\text{subject to }} \; \; \; 
    \norm{\mC_k}_F^2 - \alpha_k \leq 0, \; \; \text{for } k=1,\ldots, K.
\end{align}
\end{subequations}
This problem is convex and can hence be solved via a convex programming algorithm. Nevertheless, the equality constraint of the variational problem does not necessary hold. To enforce this constraint to the solution, we invoke iterative projection algorithm \cite{bertsekas1997nonlinear} in each iteration of the \ac{bcd} loop, i.e.,
\begin{itemize}
    \item We solve the relaxed problem for $\alpha_k,\mC_k$.
    \item We normalize $\mC_k$ as $\mC_k \leftarrow \sqrt{\alpha_k}\mC_k/\norm{\mC_k}_F$ to satisfy the equality constraint.
\end{itemize}
We keep iterating till a convergence criterion is satisfied. 
\textbf{Algorithm~\ref{alg:1}} outlines the solution approach for MOOP1, while \textbf{Algorithm~\ref{alg:2}} demonstrates the CollabSenseFed Algorithm as an OTA-FEEL-enabled collaborative ISAC framework.

\subsection{Discussions on Overhead, Complexity and Convergence}
\color{black}
To quantitatively evaluate the advantages of distributed sensing over centralized sensing , we introduced the concept of sensing signaling load (SSL), which measures the communication overhead involved in each approach. For centralized sensing, as outlined in \cite{behdad2022power}, the SSL corresponds to forwarding raw sensing signals from all devices to the \ac{ps}, mathematically expressed as  $\mathcal{T}_{\mathrm{CS}} = 2 K M s$, where $s$ represents the number of time-domain samples per antenna collected by the PS for signal reconstruction. According to the Nyquist sampling theorem, $s\approx 2 B T_s$, where $B$ is the sensing bandwidth and $T_s$ is the sensing duration in seconds.
On the other hand, the SSL of the distributed approach is $\mathcal{T}_{\mathrm{DS}} = d R/\tau$, where $d$ is the dimension of the target parameter, $R$ is the number of iterations required by the distributed algorithm, and $\tau$ is the number of local iterations at the devices. In practice, $dR/\tau \ll s$   is typically a small integer, e.g., $d=3$ in localization, and  $\tau$ could be set relatively large for efficient distributed  approaches.  In OTA computation, the devices transmit over the same frequency band and time to perform the computation  in the analog domain. Thus, the SSL does not scale with $K$.
\color{black}

The complexity of MOOP solution per iteration is determined by three main steps. First, computing $\mM^\star[t]$ involves summing $K$ terms, where each term requires matrix multiplications (e.g., $\mH_k \mC_k \mC_k^\her \mH_k^\her$) and a matrix inversion. The cost for these operations is $\mathcal{O}(KNM^2)$ for the multiplications, $\mathcal{O}(KNMI)$ for additional products involving $\mW_k$, and $\mathcal{O}(N^3)$ for the inversion, leading to a total complexity of $\mathcal{O}(KNM^2 + KNMI + N^3)$. Second, solving the convex problem $\mathcal{M}_\star$ requires minimizing a Frobenius norm-based objective function with $K$ variables, each involving matrices of size $M \times I$. This results in a complexity of $\mathcal{O}((K(MI + 1))^3)$ due to the cubic scaling of convex optimization algorithms. Third, normalizing $\mC_k$ involves computing its Frobenius norm and scaling, costing $\mathcal{O}(KMI)$. 
Combining these, the per-iteration complexity is approximately $\mathcal{O}(KNM^2 + N^3 + (KMI)^3)$. Assuming $L$ iterations for convergence, the overall complexity becomes approximately 
$\mathcal{O}(L(KNM^2 + N^3 + (KMI)^3))$. \textcolor{black}{Throughout the numerical experiments, we evaluated the average runtime per epoch of \textbf{Algorithm~\ref{alg:2}} against various system parameters and fitted the results using curve-fitting algorithms to validate the complexity order, e.g., we have observed cubic scaling with the number of devices and antennas.} 

It is straightforward to show that CollabSenseFed converges to local minimum of the training losses and  the sensing objective. This follows from the fact that \textbf{Algorithm~\ref{alg:1}} guarantees the aggregation error to be bounded \footnote{\color{black}Algorithm~\ref{alg:2} assumes full device participation, but in practice, dropouts can bias aggregation based on data distribution, affecting both learning and sensing. This can be mitigated using fault-tolerant FL strategies like weighted aggregation and dropout-aware updates \cite{qian2024dropfl, sun2023mimic}.\color{black}}.
Hence, one can use \cite[Theorem~2.2]{friedlander2012hybrid} to show that with sufficiently small learning rate choices, the CollabSenseFed converges to a local minimum of the empirical loss. The proof follows the same approach in \cite{bereyhi2023device,asaad2024joint} and the references therein.
Due to lack of space, we skip the details and refer interested readers to \cite{asaad2024joint,bereyhi2023device}.

\begin{algorithm}[t!]
\caption{CollabSenseFed Algorithm}\label{alg:2}
    \begin{algorithmic}[1]
        \Require Initiate some random location $\bvv$, set $d$ to the learning model size and $t=0$.
        \State Compute $\mC_k$ and $\mM[t]$ via \textbf{Algorithm~\ref{alg:1}}.
        \For{multiple communication rounds}
        \For{$t=1, \ldots, d$ intervals}
        \For{Device $k = 1:K$}
        \State Set $\bss_k [t] = \sum_{i=1}^I \bc_k^{i} g_k^i [t]$;
        \State Transmit $\bxx_k[t] = \bss_k[t] p_k[t]$ to the PS.
        \State 
        Compute $\hat{\Xi}_k$ from reflections with \eqref{eq:suff_stat}.
          \State Set~$g_k^1[t+1]$ to~be an entry~of $\nabla \ell_k (\bvv)$. 
          \For{$i=2,\ldots,I$}
        \State Compute local gradient $ g_k^i [t+1]$
        \EndFor
        \EndFor
        \State PS receives $\byy[t]$ as in \eqref{eq:yPS}.
        \State PS computes $\br [t] =  \mM^\her [t]  \byy[t]$ and send it back.
        \State Update the corresponding entry of $\bvv$ with $r_1(t)$ 
        \For{$i=2,\ldots,I$}
        \State Set entry $t$ of global gradient of model $i$ to $r_i [t]$
        \EndFor
        \State Update: $t = t+1$;
        \EndFor
        \For{$i=2,\ldots,I$}
        \State Update model $i$ using its global gradient
        \EndFor 
        \EndFor
    \end{algorithmic}
\end{algorithm}
\color{blue}
\begin{figure*}[t]
  \centering
  \begin{minipage}{0.32\textwidth}
    \centering
    \includegraphics[scale=.89]{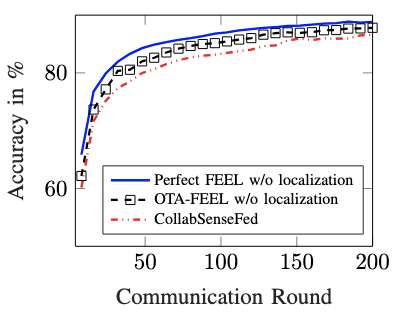}
    \caption{Effect of sensing on MNIST accuracy}
    \label{fig:63}
  \end{minipage}
  \hfill
  \begin{minipage}{0.32\textwidth}
    \centering
   \includegraphics[scale=.88]{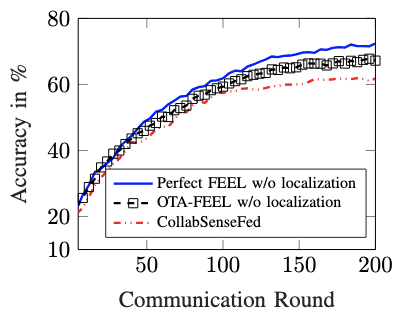}
	\caption{Effect of sensing on CIFAR accuracy}
  \label{fig:8}
  \end{minipage}
  \hfill
  \begin{minipage}{0.32\textwidth}
    \centering
    \includegraphics[scale=.89]{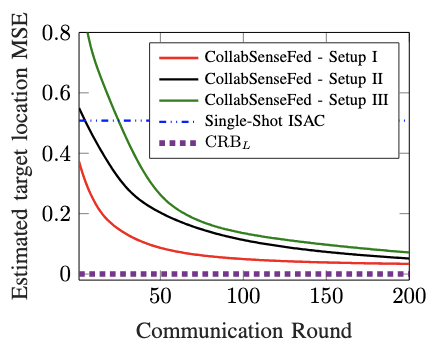}
	\caption{Target location MSE vs. iterations}
   \label{fig:9}
  \end{minipage}
  \vspace{-0.4cm}
\end{figure*}

\color{black}
\section{Numerical Results and Discussions}

\subsection{Experimental Set-up and Benchmarks}
We consider $K$ devices and a \ac{ps}, where the devices collaboratively train a global model and localize a target. In all simulations, devices are randomly placed within a ring centered at the \ac{ps}, with inner radius $R_{\rm in} = 50$~m and outer radius $R_{\rm out} = 100$~m. Their locations are distributed, such that they all lie within an arch whose angle is smaller than $20^\circ$. We assume $K=15$ devices, with $M=4$ antennas at each device and $N=16$ antennas at the \ac{ps}, unless otherwise stated. A target is located randomly within a ring with inner radius $R_{\rm in} = 100$~m and outer radius $R_{\rm out} = 110$~m around the \ac{ps}, in the same arch that devices are located, at an altitude that is uniformly chosen from interval $[0,3]$~m. The array response is specified by \eqref{array_resp} with $\alpha_0$ being set to $1$. The aggregation of the parameters of the local models being trained at the devices,  as well as the sensing gradients, i.e., the gradients being exchanged to solve the joint \ac{ml} localization problem, are carried out over-the-air. To mutually orthogonalize the pulses of the devices, we use DFT pulses of length $T \geq K$. 

The uplink channels are considered to experience \ac{iid} Rayleigh fading with a coherence time interval that includes four communication rounds, i.e., after four rounds of aggregating global model for the learning task. For each realization of this setting, we solve the underlying design problem via \textbf{Algorithm~\ref{alg:1}}. 
We use the samples collected by the devices from target reflections in each frame of $T$ transmissions to estimate the location of the target. To this end, the devices use  \textbf{Algorithm~\ref{alg:2} }to solve the joint \ac{ml} estimation problem in a distributed fashion.
The details of the learning and sensing settings are represented in the sequel.  \textcolor{black}{We also consider the \textit{single-shot} localization approach from \cite{liu2022toward}, where devices compute local sufficient statistics in the first round and send them to the PS for averaging to estimate the target location. After this initial step, the sensing task ends, and only learning updates are transmitted in subsequent rounds. In contrast, the proposed CollabSenseFed scheme performs sensing iteratively: devices continuously extract sensing-related gradients and transmit them alongside learning updates over multiple rounds. To ensure a fair comparison, both schemes are evaluated under identical resource constraints, with the same number of uplink transmissions (i.e., channel uses) and the same total transmit power. In CollabSenseFed, this power is split between sensing and learning throughout the process, whereas in the single-shot baseline, it is fully devoted to learning after the first round}



\textcolor{black}{The proposed joint sensing-learning approach significantly improves sensing accuracy, with only a modest degradation in learning performance due to interference from sensing gradient transmissions. In contrast, the single-shot baseline achieves slightly better learning accuracy by allocating all resources to gradient exchange after the first round, but this comes at the cost of substantially reduced sensing performance. These trade-offs are further demonstrated and quantified in the following through numerical results.}

\subsection{Learning Setting}
We evaluate two learning setups on MNIST and CIFAR-10. For MNIST, a two-layer MLP is trained on 60,000 images of size  \(28 \times 28\), with 10,000 used for testing \cite{xiao2017fashion}. The training and testing data sets are randomly shuffled and unevenly distributed across devices. To simulate non-uniform data distribution in federated learning, we assign random percentages to each client using the Dirichlet distribution \textcolor{black}{with concentration parameter $\alpha=0.4$ as in \cite{marfoq2021federated}, which introduces a moderate level of data heterogeneity, ensuring that different clients receive imbalanced class distributions. Note that non-uniform distribution mainly affects the learning task by increasing aggregation variance, which can slow convergence. However, sensing performance remains largely independent of data heterogeneity, as it is driven by physical signal reflections rather than class distributions.} 
For CIFAR-10 classification \cite{krizhevsky2009learning}, we train a deep CNN on $50,000$ RGB images of size \(32 \times 32\), with 10,000 used for testing. The network comprises three convolutional blocks—each with two \(3 \times 3\) convolutional layers (using $32$, $64$, $128$, and $256$ filters), ReLU activation, and \(2 \times 2\) max-pooling—followed by fully connected layers of sizes 1024 and 512, and a final 10-output softmax. In each communication round, the devices iterate five local updates on their local models via \ac{sgd}. The model parameters are then transmitted jointly with the gradients of localization loss using the uplink beamforming matrices $\mC_k$. We consider $I=2$, where $i=1$ and $i=2$ refer to the localization and learning tasks, respectively. 


\subsection{Results and Discussions}

\par \textit{1) Performance of CollabSenseFed Algorithm}: Fig.~\ref{fig:63} shows the performance of MNIST classification and compares the results with two ideal cases, namely, \textit{perfect FEEL without localization} and \textit{OTA-FEEL without localization}. These cases consider the learning task being performed individually (no target for sensing) with the former considering perfect orthogonal and noiseless uplink communication, and the latter considers OTA-FEEL. As the figure shows, CollabSenseFed tracks both cases closely, indicating a practically-negligible impact of target sensing on learning quality \cite{yang2020federated,bereyhi2023device}. 
The small gap between CollabSenseFed and learning-only cases can be attributed to inter-task interference. In fact, in this case, the interference by the localization gradients on the aggregated parameters leads to higher aggregation error and hence degrades performance of the trained classifier. 
This degradation is in fact the price paid to integrate a distributed sensing functionality into our system.
Fig.~\ref{fig:8} shows the results for training of the CNN architecture over the CIFAR-10 dataset. The result shows a similar trend with a larger gap between the baselines, that comes from the fact that the classification task in CIFAR-10 is more complex.
Fig.~\ref{fig:9} investigates the sensing performance of the CollabSenseFed and its variants against  single-shot distributed ISAC approach. The three variants are given as: \\$\bullet$~\textit{CollabSenseFed - Setup I}: The devices only estimate the target location via CollabSenseFed while communicating their sensing gradients over perfect (noiseless and orthogonal) uplink channels, i.e., no learning task is involved. 
\\$\bullet$~\textit{CollabSenseFed - Setup II}: Only sensing is considered (similar to Setup I), however, the aggregation of the gradients is performed directly over-the-air. 
\\$\bullet$~ \textit{CollabSenseFed - Setup III}: CollabSenseFed is used to perform the joint sensing and learning over-the-air. 
\begin{figure*}[t]
  \centering
  \begin{minipage}{0.32\textwidth}
    \centering
\includegraphics[scale=.4]{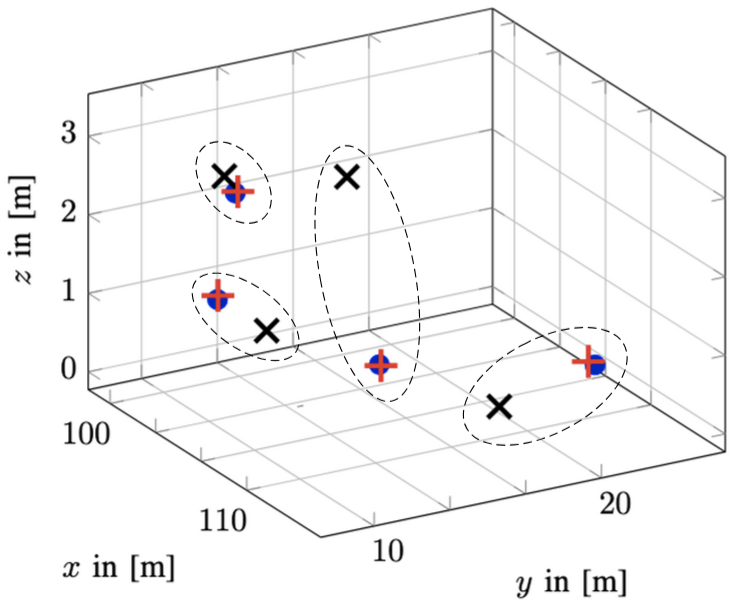}
\caption{Target location estimation performance.}
\begin{tabular}{r@{: }l}
  {\color{red}\textbf{\(\boldsymbol{+}\)}} & \small Estimated target location \\
  {\color{black} \scalebox{1.3}{\(\star\)}} & \small Ground-truth target location \\
  {\(\boldsymbol{\times}\)} & \small Single-Shot Sensing 
\end{tabular}

\label{fig:FigEnd}
  \end{minipage}
  \hfill
  \begin{minipage}{0.32\textwidth}
    \centering
\includegraphics[scale=.85]{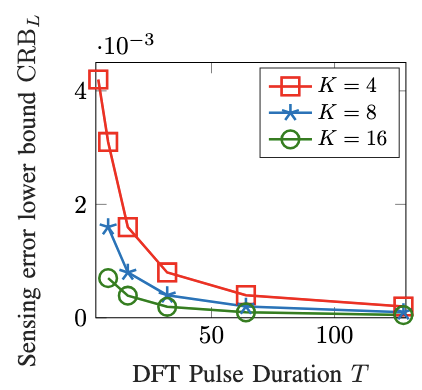}
	\caption{Lower bound on variance of sensing error.}
 \label{fig:crbvstime}
  \end{minipage}
  \hfill
  \begin{minipage}{0.32\textwidth}
    \centering
    \includegraphics[scale=.87]{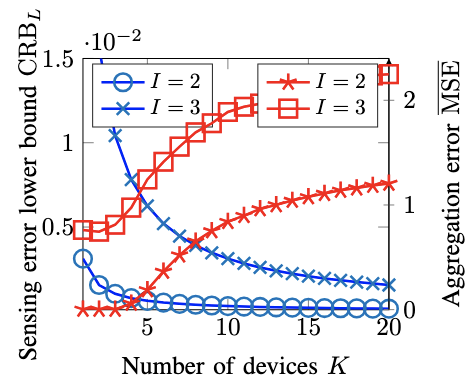}
	\caption{Sensing error and aggregation error vs. $K$.}
  \label{fig:MSE-CRBvsUser}
  \end{minipage}
  \vspace{-0.5cm}
\end{figure*}
\\
Fig.~\ref{fig:9} shows the \ac{mse} of~the estimated target, denoted as $\mathrm{MSE} = \Ex{\norm{\mathbf{v} - \hat{\mathbf{v}}}^2 }{ }$, plotted against communication rounds. Here, we consider CIFAR-10 classification as the learning task.
As the results show, CollabSenseFed gradually reduces the sensing error towards the $\mathrm{CRB}_L$ in all three cases. Among the three setups, Setup III converges the slowest, which is due to the interference by the learning task. The three setups, however, converge to the joint Cram\'er-Rao bound after enough number of iterations. This is intuitive as CollabSenseFed solves the \ac{ml} estimation problem in a distributed fashion and hence at its converging point it performs close to the joint Cram\'er-Rao bound. \textcolor{black}{
The figure further demonstrates the trade-off between sensing and communication: with a larger number of communication rounds, we can achieve more accurate estimation of the location, which is an intuitive observation.}
The figure shows a drastic enhancement achieved by CollabSenseFed as compared to benchmark \textit{single-shot} localization. This follows from the fact that in single-shot sensing, the devices estimate the target location from an insufficient amount of data. This leads to large localization error which does not vanish by averaging. On the other hand, CollabSenseFed communicates the local gradients together with its uplink transmissions and updates gradually its estimation through time via moving towards the joint \ac{ml} estimation. The CollabSenseFed shows its best performance with perfect communication links, which is further consistent with intuition. Additionally, we examine the scenario where users transmit their local gradients over-the-air, focusing solely on the localization task, i.e., $I=1$. In this case, the performance surpasses that of CollabSenseFed. This improvement is attributed to the absence of inter-task interference, which simplifies the learning process. 
Fig.~\eqref{fig:FigEnd} shows the target location estimates by CollabSenseFed, which closely match the ground truth, demonstrating its superior performance. In contrast, the single-shot localization method shows large deviations from the true location.
\par \textit{2) Impact of DFT Pulse Duration:}
Fig.~\ref{fig:crbvstime} shows the sensing error achieved by Algorithm~\ref{alg:2} against the pulse duration $T$ for various choices of $K$. As the figure shows, the sensing error drops monotonically against pulse duration. This is intuitive as by increasing the pulse duration for a fixed number of devices $K$, the variance of the sufficient statistics collected by the devices reduces, and therefore the estimation error decreases. Furthermore, the figure indicates that the sensing error also decreases as the number of devices $K$ increases. This is also intuitive, as a greater number of devices results in more samples, and thereby enhanced localization accuracy. 
\par\textit{3) Impact of the Number of Users:} We next investigate the variation of the aggregation error and sensing error bound against $K$ for various choices of $I$. 
While simulations focus on I=2I=2—one task for over-the-air classification training and the other for distributed localization—this figure treats II as a system parameter to study its impact on Algorithm~\ref{alg:1}’s solution. As Fig.~\ref{fig:MSE-CRBvsUser} demonstrates, the sensing error bound reduces against $K$ which is consistent with our observation in Fig.~\ref{fig:crbvstime}. This figure also demonstrates that by increasing the number of learning tasks, the achieved CRB increases. This behavior follows the mutual dependency of the aggregation and sensing error in the multi-objective optimization, where by larger number of tasks the interference increases and hence the sensing error increases.  Also, we can see that the aggregation error increases against $K$. With larger number of devices, the interference increases, and hence the aggregation error grows large. We further observe that the aggregation error increases with the number of tasks which confirms our initial illustrations. In fact this follows from higher inter-task interference, and can implicitly lead to higher sensing error.
\begin{figure*}[t]
  \centering
  \begin{minipage}{0.32\textwidth}
    \centering
    \includegraphics[scale=.85]{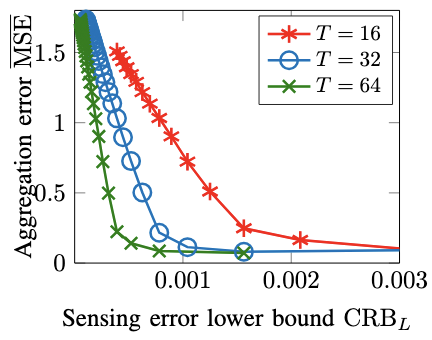}
    \caption{Pareto front: trade-off between sensing and aggregation error for different $T$.}
    \label{fig:pareto}
  \end{minipage}
  \hfill
  \begin{minipage}{0.32\textwidth}
    \centering    
    \includegraphics[scale=.83]{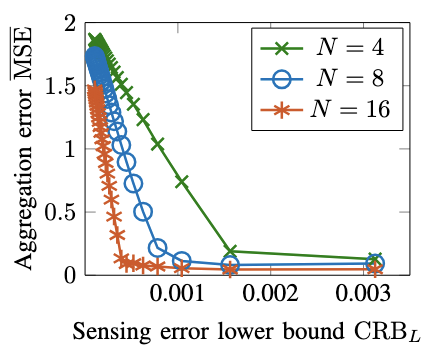}
    \caption{Pareto front: trade-off between sensing and aggregation error for different $N$.}
    \label{fig:paretovsN}
  \end{minipage}
  \hfill
  \begin{minipage}{0.32\textwidth}
    \centering
     \includegraphics[scale=.85]{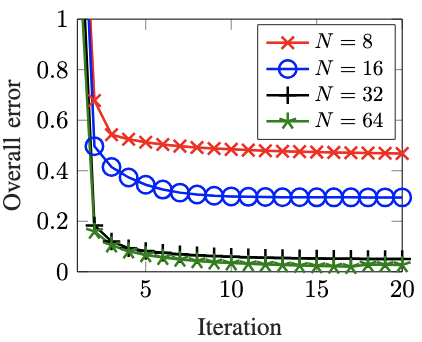}
    \caption{Convergence behavior of CollabSenseFed algorithm.}
    \label{fig:convergence}
  \end{minipage}
    \vspace{-0.5cm}
\end{figure*}
\par \textit{4) Impact of DFT Pulse Duration and Transmit Antennas on Pareto-Front:}
Figs.~\ref{fig:pareto} and \ref{fig:paretovsN} show the Pareto front of the MOOP, illustrating the trade-off between aggregation and sensing errors versus the number of PS antennas and DFT pulse duration, respectively. The shape of the Pareto-front implies from the fact that both the objectives are set to be minimized. In both figures, it can be observed that as the sensing error decreases, the aggregation error tends to increase,  highlighting the inherent trade-off between these two metrics.
This inverse relationship indicates that improving sensing accuracy comes at the cost of higher aggregation error. Additionally, by increasing the number of transmit antennas or expanding the pulse duration, the aggregation error further reduces and sensing error (marginally) improves. 
\par \textit{5) Convergence Analysis:}
Fig. \ref{fig:convergence} shows the convergence behavior of the proposed algorithm as a function of the number of PS antennas. In this figure, the overall error, defined as the average of the sensing error and aggregation error, is plotted against the number of iterations for different antenna configurations. As observed,  the proposed algorithm converges very fast, i.e., after less than 10 iterations. Additionally, the overall error shows a rapid drop with increasing $N$, which can be attributed to the greater degrees of freedom from larger receive arrays that enhance beamforming for model aggregation at the PS. Increasing the number of \ac{ps} antennas improves interference cancellation, thus reducing overall error.
\par \color{black}\textit{6) Centralized vs Distributed  Sensing Overhead:} 
   \begin{figure}
    \centering
    \includegraphics[scale=.85]{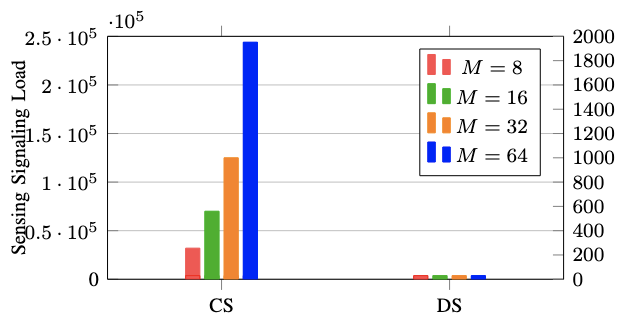}
    \caption{Sensing signaling load vs. number of antenna elements at each device.}
        \label{fig:centralized}
        \vspace{-.4cm}
 \end{figure}
The bar plot in Fig.~\ref{fig:centralized} illustrates the SSL for CS and our proposed DS approach with over-the-air computation, considering varying numbers of antennas $M = \{8, 16, 32, 64\}$ and a sample size $s = 200$. The left vertical axis, ranging from 0 to 250,000 real scalars, displays the CS overhead, which increases linearly with $M$ (e.g., 32,000 for $M = 8$ to 244,000 for $M=64$). The right vertical axis, ranging from 0 to 2000 with ticks every 200, shows the DS overhead, constant at 30 across all $M$ values due to its independence from the number of antennas. The figure 
highlights the significant overhead reduction achieved by distributed sensing in this configuration. 
\color{black}
\vspace{-.6cm}
\section{Conclusion}
This work demonstrated the potential of integrating localization into the \ac{ota-fl} framework by proposing  a novel approach for joint target localization and distributed learning. The proposed scheme builds on the FEEL framework to jointly localize a target using samples collected across the network. Numerical results show that, unlike conventional methods limited by individual estimators, CollabSenseFed progressively reduces sensing error toward the Cram\'er-Rao bound. These findings suggest several possible directions for future research, including natural extensions to distributed mobility sensing and multi-target scenarios. \textcolor{black}{Investigating the impacts of other classical techniques, such as user dropout, is another interesting direction for future work.}  \textcolor{black}{ Analyzing the robustness of the proposed scheme under practical impairments, such as carrier frequency offset (CFO), phase noise, or hardware imperfections, represents an important  direction for future.}

\appendices
\section{ } 
 We start with the first limit: since $\tr{\tilde{\mR}_k [\tau]}$ decays against $\tau$, as $T\rightarrow\infty$ the sequence $\bxx_k [t] \bxx_k^\her [t]$ contains infinitely large number of uncorrelated realizations. Therefore, using the weak law of large numbers, we can write
 \begin{subequations}
     \begin{align}
		\mR_{k,k} &= \frac{1}{P T} \sum_{t=1}^T  \bxx_k [t] \bxx_k^\her [t] \rightarrow \Ex{ \frac{1}{T} \sum_{t=1}^T \bxx_k [t] \bxx_k^\her [t] },\nonumber\\[-4pt]
		&= \frac{1}{P}\Ex{ \frac{1}{T} \sum_{t=1}^T \bss_k [t] \bss_k^\her [t]  \abs{p_k[t]}^2},\nonumber\\[-4pt]
		&= \frac{1}{P} \left(\frac{1}{T} \sum_{t=1}^T  \Ex{\bss_k [t] \bss_k^\her [t] } \right)
		= \frac{1}{P} \tilde{\mR}_k[0].\nonumber
	\end{align}
 \end{subequations}
	To prove the validity of \eqref{eq:eq2}, let us expand the sample covariance matrix $\Ex{\mR_{\ell,k}}  = \frac{1}{PT}  \sum_{t=1}^T \Ex{ \bxx_\ell [t] \bxx_k^\her [t]}\stackrel{\mathrm{(a)}}{=} \frac{1}{PT}  \sum_{t=1}^T \Ex{ \bss_\ell [t]} \Ex{\bss_k^\her [t]} p_\ell [t] p_k^* [t]
		=0$
where (a) follows from the independence of the sequences $\bss_k[t]$ and $\bss_\ell[t]$. Define $f_{\ell k}[t]=p_\ell [t] p_k^* [t] $. 
We next compute the sum-variance of the entries of sample covariance matrix as given in \eqref{eq:DD} at the top of the next page. 
\begin{figure*}
	 \begin{subequations}\label{eq:DD}
 \begin{align}
 \vspace{-1cm}\Ex{\norm{\mR_{\ell,k}}_F^2}&\!=\! \Ex{\tr{\mR_{\ell,k}\mR_{\ell,k}^\her}}\!=\!\frac{1}{P^2T^2} \Ex{  \tr{ \brc{\sum_{t=1}^T \bxx_\ell [t] \bxx_k^\her [t]}
   \brc{\sum_{t=1}^T \bxx_k [t] \bxx_\ell^\her [t]}}},\\[-7pt]
		&\!=\! \frac{1}{P^2T^2} \sum_{t_1=1}^T \sum_{t_2=1}^T \tr{ \Ex{\bss_\ell [t_1] \bss_k^\her [t_1] \bss_k [t_2] \bss_\ell^\her [t_2] }} f_{\ell k}[t_1] f^*_{\ell k}[t_2],\\[-3pt]
		&\!\stackrel{\mathrm{(a)}}{=}\!\frac{1}{P^2 T^2} \sum_{t_1=1}^T \sum_{t_2=1}^T  \Ex{\bss_k^\her [t_1] \bss_k [t_2]  } \Ex{ \bss_\ell^\her [t_1] \bss_\ell [t_2] } f_{\ell k}[t_1] f^*_{\ell k}[t_2],\\[-3pt]
		&\!=\! \frac{1}{P^2 T^2} \sum_{t_1=1}^T \sum_{t_2=1}^T \tr{\tilde{\mR}_k[t_1-t_2]} \tr{\tilde{\mR}_{\ell} [t_2-t_1]}   f_{\ell k}[t_1] f^*_{\ell k}[t_2],\\[-4pt]
		&\!\stackrel{\mathrm{(b)}}{\leq} \!\frac{1}{P^2 T^2} \sum_{t_1=1}^T \sum_{t_2=1}^T \tr{\tilde{\mR}_k[0]} \tr{\tilde{\mR}_{\ell} [0]}  f_{\ell k}[t_1] f^*_{\ell k}[t_2]\!=\!\frac{1}{P^2 T^2} \tr{\tilde{\mR}_k[0]} \tr{\tilde{\mR}_{\ell} [0]}  \brc{\sum_{t=1}^T f_{\ell k}[t] }^2\!\!=\! 0.
	\end{align}
  \end{subequations}
  \hrule
  \end{figure*}
In \eqref{eq:DD}, the identity (a) follows the independency of $\bss_k[t]$ and $\bss_\ell[t]$, and the inequality (b) is the result of $\tr{\tilde{\mR}_k [\tau]}$ decaying. This indicates that $0 \leq \Ex{\norm{\mR_{\ell,k}}_F^2} \leq 0$ implying $\Ex{\norm{\mR_{\ell,k}}_F^2} = 0$.
By the extended central limit theorem, $\mR_{\ell,k}$ converges to a zero-mean, zero-variance Gaussian random variable as $T \to \infty$.
 \section{ } 
 The proof follows directly from the definition $\tilde{\mR}_k[\tau] = \Ex{\bss_k [t] \bss_k^\her [t-\tau]}$. Plugging $\bss_k[t]$ into the definition, we have
\begin{subequations}
\begin{align}
	\vspace{-1.5cm}\tilde{\mR}_k[\tau] 
	&= \Ex{ \mC_k \bgg_k [t] \bgg_k^\her [t-\tau] \mC_k^\her },\\
	&= \mC_k \Ex{\bgg_k [t] \bgg_k^\her [t-\tau]} \mC_k^\her\stackrel{\mathrm{(a)}}{=} R_k[\tau] \mC_k\mC_k^\her,
\end{align}
\end{subequations}
where (a) follows $\Ex{\bgg_k [t] \bgg_k^\her [t-\tau]}  =  R_k[\tau] \mI$. 
 \section{}
Let start from the definition, we have
\begin{align}
	\crb \brc{\{\mC_k\} \vert \bvv} =  \tr{\mJ^{-1}\brc{\bvv}} \stackrel{\mathrm{(a)}}{\geq} \frac{9}{\tr{\mJ\brc{\bvv}}},\label{eq:ineq}
\end{align}
where (a) follows from the fact that for any inevitable $n\times n$ matrix $\mA$, we have $\tr{\mA}\tr{\mA^{-1}} \geq n^2$. In the following, we use the following lemma.
Using Lemma~\ref{lem:3}, we can write
\begin{subequations}
\begin{align}
	&\tr{\mJ\brc{\bvv}} = \sum_{i=x,y,z} [\mJ\brc{\bvv}]_{i,i}, \\[-2pt]&\leq 
   E\brc{\{\mC_k\}} \sum_{k=1}^K \frac{T}{\varsigma_k^2} \sum_{i=x,y,z}
   \norm{ \frac{\partial}{\partial \ell_i} \baa_{k} \brc{\bold{v}} \baa_k \brc{\bold{v}}^\trp }_F^2.\label{eq:42}
	\end{align}
 \end{subequations}
The bound in \eqref{eq:42} is tight when optimized over~$\bvv$. \textcolor{black}{We note that $\baa_{k} \brc{\bold{v}}$ follows the classical array response model given in \eqref{array_resp}, with $\alpha_k = \frac{\alpha_0}{\|\mathbf{v} - \mathbf{v}_k\|_2}$ and $\theta_k(\mathbf{v}) = \arcsin\left(\frac{\ell_z - \ell_{kz}}{\|\mathbf{v} - \mathbf{v}_k\|_2} \right)$.  Since $\mathbf{v}$ lies in a compact region excluding the singularity at $\mathbf{v}=\mathbf{v}_k$, both $\theta_k(\mathbf{v})$ and $\alpha_k$ are smooth functions with bounded derivatives. Consequently, the entries of  $\alpha_k$ which are composed of trigonometric and rational functions of $\mathbf{v}$, are differentiable with bounded partial derivatives. This ensures that the gradients of $\baa_{k} \brc{\bold{v}}$ are bounded over the region of interest \cite{orfanidis1995introduction}}.
This implies that there exists $\epsilon > 0 $ such that 
\begin{align}
\sum_{i=x,y,z} \norm{ \frac{\partial}{\partial \ell_i} \baa_{k} \brc{\bold{v}} \baa_k \brc{\bold{v}}^\trp }_F^2 \leq \frac{1}{\epsilon}.\label{eq:BND}
\end{align}
Let $\epsilon = \varrho/9$. For any $\bvv$, we have $\crb \brc{\{\mC_k\} \vert \bvv} \geq \frac{\varrho \bar{\varsigma}^2}{T E\brc{\{\mC_k\}}}$ with $\bar{\varsigma}^2$ in Theorem 2. 
From \eqref{eq:worstcrb}$,\crb_{\text{worst}} \brc{\{\mC_k\}} \geq \crb \brc{\{\mC_k\} \vert \bvv}$ for all $\bvv$. 
Note that this universal bound holds for all target parameters $\bvv$, with its tightness governed by $\epsilon$. Choosing $\epsilon$ to make \eqref{eq:BND} tight at the worst-case $\bvv$ ensures Theorem~2’s bound is tight. Since  $\epsilon$ affects the bound only via the constant $\varrho$, minimizing the bound in Theorem~2 effectively minimizes the worst-case CRB, independent of $\varrho$.

\bibliographystyle{IEEEtran} 
\bibliography{ref}  

\end{document}